\def\BibTeX{{\rm B\kern-.05em{\sc i\kern-.025em b}\kern-.08em
    T\kern-.1667em\lower.7ex\hbox{E}\kern-.125emX}}
\newtheorem{theorem}{Theorem}[section]
\newtheorem{lemma}[theorem]{Lemma}
\newtheorem{remark}[theorem]{Remark}
\newtheorem{corollary}[theorem]{Corollary}
\newtheorem{problem}[theorem]{Problem}
\newtheorem{conjecture}[theorem]{Conjecture}
\newcommand{\real}{{\mathbb{R}}}
\newcommand{\disturb}{\delta}
\renewcommand{\hat}{\widehat}
\newcommand{\oprocendsymbol}{\hbox{$\bullet$}}
\newcommand{\oprocend}{\relax\ifmmode\else\unskip\hfill\fi\oprocendsymbol}
\begin{document}
\title{Asymptotic Stability of Active Disturbance Rejection Control for Linear SISO Plants with Low Observer Gains
}

\author{James Berneburg \qquad Daigo Shishika \qquad Cameron Nowzari \thanks{This work was supported in part by the Department of the Navy, Office of Naval Research (ONR), under federal grant N00014-22-1-2207. 
The authors are
    with the Department of Electrical and Computer Engineering,
    George Mason University, Fairfax, VA 22030, USA,  {\tt\small
      \{jbernebu,dshishik,cnowzari\}}@gmu.edu}}%

\maketitle

\begin{abstract}
This paper theoretically investigates the closed-loop performance of active disturbance rejection control (ADRC) on a third-order linear plant with relative degree 3, subject to a class of exogenous disturbances. While PID control cannot be guaranteed to be capable of stabilizing such plants, ADRC offers a model-free alternative. However, many existing works on ADRC consider the observer gains to be taken arbitrarily large, in order to guarantee desired performance, such as works which consider parameterizing ADRC by bandwidth. 
This work finds that, 
for constant exogenous disturbances, arbitrary eigenvalue assignment is possible for the closed-loop system under linear ADRC, thus guaranteeing the existence of an ADRC controller for desired performance without taking any gains arbitrarily large. We also find that stabilization is possible when the exogenous disturbance is stable, and show how ADRC can recover the performance of model-based observers. We demonstrate aspects of the resulting closed-loop systems under ADRC in simulations.%, such as the lack of a separation principle between observer and controller. 
\end{abstract}

\section{Introduction}

%\begin{itemize}
%\item Model based control's limitations
%\item PID control and its limitations, particularly not being able to control higher order plants
%\item ADRC introduction
%\item ADRC cannot just be tuned by hand: too many parameters
%\item High gain and bandwidth parameterizations
%\item Problems with high gains:
%\begin{itemize}
%\item Noise?
%\item Limited sampling rate?
%\end{itemize}
%\item Our contribution: show ADRC can asymptotically stabilize a class of plants with low gains and provide an 
%asymptotically stabilizing parameterization which requires lower gains
%\end{itemize}

Model-based control provides rigorous theoretical results and allows a wide variety of plants, both linear and nonlinear, to be controlled provided the plant is known~\cite{ogata2010modern,khalil2002nonlinear}. 
%One of the richest fields is on linear dynamical systems, for which it is known when they can be controlled and how to design a control law using the full state. 
%This can be generalized to output feedback by using o
Observer-based control allows the controller to estimate the state, with the result that the controller is much better at dealing with noise~\cite{ogata2010modern}. 
However, the assumption of a good model of the plant is restrictive~\cite{ZG:2014}. 
%Although uncertainties in the model are still assumed to exist, they are considered a disturbance; one which will get smaller the better your model is. Therefore, c
Considerable effort goes into finding perfect ``correct'' models, because this minimizes disturbances due to uncertainties in the model~\cite{ZG:2014}. 
%, including techniques such as system ID and adaptive control. %cite or remove 
This winds up being very expensive, requiring, potentially, a lot of time and effort. %while a PID controller can be tuned by people with much less expertise. 

On the other hand, proportional-integral-derivative (PID) control does not require a model of the plant and is widely used in industry applications~\cite{ogata2010modern,ZG:2014, JH:2009, QGW-CH-XPY:2001, SWS-IBL:1996}. Instead of a model, it only requires the tuning of its several gain parameters, the effects of which are well understood. However, it does not work in all applications and has some limitations, such as requiring the derivative of the output and its integral term causing stability problems~\cite{ZG:2014, JH:2009, SJ-RD:1996, SWS-IBL:1996, ASR-MC:2006}. 
%While it assumes that the derivative of the output is known, in practice, this must be estimated, and so its performances degrades with high sensor noise. 
%Additionally, its integral term, while useful for preventing steady state errors, also can cause problems for stability. 
Additionally, PID control cannot stabilize some higher-order plants with desired performance unless it is further modified~\cite{SJ-RD:1996,paraskevopoulos1980design,QGW-CH-XPY:2001,ASR-MC:2006}. 
%Finally, although PID control seems to be suited to many control applications, it has difficulty with third order systems and may require additional modifications to work for them~\cite{SJ-RD:1996}. 
%it cannot handle arbitrary systems of relative degree higher than two, where relative degree is the number of integrators between the input and the output, without additional modifications.
For example, controlling these might require access to higher order derivative terms, but those are even more difficult to estimate in the constant presence of noise, and 
modifications, such as the addition of a lead/lag compensator~\cite{ASR-MC:2006}, can deal with these problems at the cost of complexity. As a result, while model-based control can handle a great variety of plants at the cost of requiring a good model of the plant, PID can be tuned much more simply but is limited in applications. Therefore, a relevant question is: are there any alternative controllers which can deal with more complicated plants without requiring a model? 

One candidate for such a controller is active disturbance rejection control (ADRC). Several works consider ADRC as an alternative to PID control~\cite{JH:2009,ZG:2014,WHC-JY-LG-SL:2015,ZHW-HCZ-BZG-FD:2020,ZW-ZG-DL-YC-YL:2021}. ADRC is a type of model-free controller which requires very little information about the plant~\cite{ZG:2014, JH:2009}. 
%, and it can stabilize plants which PID control cannot. 
It is related to the nonlinear control technique feedback linearization~\cite{YH-WX:2014}. 
In feedback linearization, a change of coordinates is used to find a canonical form for a system, where the system is represented as a chain of integrators, with all of the nonlinear aspects of the system grouped together so that they can be canceled by the input~\cite{khalil2002nonlinear}. 
However, this still requires a very good estimation of those nonlinearities, so it still requires a very good plant model. 
ADRC uses a similar principle of canceling some plant dynamics to leave a chain of integrators, but, instead of relying on a model, ADRC uses an extended state observer (ESO)~\cite{YH-WX:2014}. As a result, ADRC has been proposed to control many different kinds of plants, both linear and nonlinear, including ones of higher order~\cite{YH-WX:2014,xue2013frequency,ZLZ-BZG:2016,zheng2007stability}. 
This means that ADRC does not require a model of the plant, but is able to control some plants which PID control cannot. %{\color{red} I should cite sources which indicate how \emph{versatile} (in terms of plants it could control) ADRC is before this sentence.}

As a result, ADRC has seen success in some areas, such as in simulations and with theoretical guarantees. See~\cite{YH-WX:2014,ZHW-HCZ-BZG-FD:2020} for an overview of more theoretical results and for specific applications of ADRC, and 
see~\cite{ZW-ZG-DL-YC-YL:2021} for discussion on applying ADRC to thermal processes, such as coal power plants, and existing results on that topic. In~\cite{WHC-JY-LG-SL:2015}, specific industry applications of ADRC are listed, including use in servo motors and in some of Texas Instruments' motion control chips~\cite{instruments2014technical}. 
On the theoretical side, the work in~\cite{BZG-ZLZ:2011} and~\cite{QZ-LG-ZG:2012} provides conditions for convergence of the observer error to within specified bounds for the ESO.  
\cite{zhao2010adrc} develops an ADRC controller for a specific problem of intercepting a target, and validates it in simulations. Similarly,~\cite{SZ-ZG:2013} uses ADRC to suppress vibrations in resonant systems and validates the proposed controller in simulations. 
The work~\cite{tian2007frequency} considers ADRC in the frequency domain for second order linear plants, and uses numerical simulations to find that stability margins can experience little change, in spite of significant variation in the plant parameters. \cite{xue2013frequency} provides related theoretical results using the frequency domain, theoretically guaranteeing asymptotic stability of linear systems under ADRC with a reduced-order observer for a range of plant parameters. 
Furthermore, it examines how to achieve desired crossover frequencies and stability margins. This paper~\cite{ZG:2006} introduces the bandwidth parameterization and discusses its use for ADRC, providing a theoretical guarantee on the BIBO stability of the closed-loop system, when the derivative of the total disturbance is treated as the input. \cite{zhao2012adrc} considers the problem of stabilizing a minimum-phase plant with an uncertain relative degree, and finds that ADRC is able to do so. 
\cite{xue2013parameters} claims to provide conditions for exponential stability of a class of nonlinear plants under ADRC, subject to a limited sampling rate, for state feedback. \cite{ZLZ-BZG:2016} and \cite{zheng2007stability} provide practical stability results for applying ADRC to nonlinear systems, and~\cite{MR-QW-CD:2016} additionally considers input saturation. Similarly, \cite{WX-YH:2013} provides results on trajectory tracking with ADRC on nonlinear systems. \cite{BZG-ZLZ:2013} provides practical convergence results for nonlinear ADRC applied to linear systems. 

However, in spite of these successes, a disadvantage of ADRC is that, unlike PID control, it has so many parameters that tuning them directly is impractical. 
Instead, many works use a parameterization. Some consider a high gain parameterization for ADRC's extended state observer (ESO)~\cite{BZG-ZLZ:2011,MR-QW-CD:2016}, and many works~\cite{YH-WX:2014,WX-YH:2013,WX-WB-SY-KS-YH:2015,SZ-ZG:2013,ZW-ZG-DL-YC-YL:2021,xue2013frequency,xue2013parameters} use a special case of it known as the bandwidth parameterization~\cite{ZG:2006}. 
Additionally, the control law is sometimes parameterized by bandwidth as well~\cite{QZ-LG-ZG:2012,ZG:2006,tian2007frequency,zheng2007stability,tan2015linear}. 
Desired performance is often handled by considering convergence of the plant state to a reference trajectory~\cite{WX-YH:2013,WX-WB-SY-KS-YH:2015,xue2013frequency,xue2013parameters}, with the tracking becoming arbitrarily good as the bandwidth is increased. 
These methods allow for some types of desired performance simply by tuning a few parameters. 

However, these parameterizations have some known drawbacks. 
For example, in the context of practical convergence, arbitrarily close convergence may require the observer gains to be arbitrarily high, and a similar problem exists for tracking a desired reference trajectory to achieve desired performance. Such resulting high gain observers are not robust to noise, and will require higher sampling rates. A few works~\cite{xue2013parameters,herbst2023tuning,WX-WB-SY-KS-YH:2015} consider a limited sampling rate for bandwidth parameterized ADRC. The latter work~\cite{WX-WB-SY-KS-YH:2015} considers sensor noise and proposes a time-varying observer gain to better handle it at the cost of a more complicated controller. Additionally, trajectory tracking does not consider the cost of control effort, and one known problem of high gain observers is the peaking phenomenon, where the magnitude of the control input spikes at initialization~\cite{YH-WX:2014,freidovich2008performance,ZLZ-BZG:2016}. Some methods of handling peaking include a time-varying gain for the ESO~\cite{ZLZ-BZG:2016}, nonlinear observer gains~\cite{ZHW-HCZ-BZG-FD:2020}, saturating the input~\cite{freidovich2008performance}, and setting the input to zero initially~\cite{YH-WX:2014}. These specifically handle peaking, at the cost of additional complication in the controller. 
Similarly, the work~\cite{du2022analysis} considers a more complicated modification of ADRC to allow stabilization with lower gains.
%These may specifically handle peaking, but require additional complication in the controller and do not address control effort in general. 

%However, these problems seem to be specific to the parameterization, rather than to ADRC as a whole. %For a three-dimensional plant with relative degree three, our preliminary work suggests that the closed-loop poles can be placed arbitrarily, allowing for lower observer gains and preventing the peaking phenomenon. While this could hypothetically be achieved through tuning all ADRC's gains without knowing the plant parameters, doing so is not practical. 
%tuning all of ADRC's parameters without a parameterization is not practical. 
Consequently, we have high gain parameterizations, such as the bandwidth parameterization, on one hand, which offers limited types of performance but is very simple to tune. On the other hand, we have tuning all the gains of ADRC individually, which, while is impractical to do by hand, can potentially offer much wider variety in terms of performance. Therefore, it is relevant to examine what types of performance can be achieved in general, without relying on high gains. %the bandwidth parameterization. 

%Works which are most relevant to this are ones which guarantee asymptotic convergence, because this provides a performance guarantee without taking the bandwidth arbitrarily high, and works which guarantee convergence with a limited sampling rate. 
Therefore, we are most interested in works which guarantee asymptotic convergence of ADRC, because such guarantees do not require arbitrarily high observer gains. However, many works which do provide asymptotic stability guarantees require restrictive assumptions. 
For example, a bounded-input bounded-output stability result is provided in~\cite{ZG:2006}, 
but this can only guarantee asymptotic stability when the plant is an ideal chain of integrators. Similarly,~\cite{xue2013parameters} guarantees the existence of an exponentially stabilizing bandwidth-parameterized ADRC controller for a given sampling rate, but it assumes state feedback rather than output feedback for the controller. 
\cite{zheng2007stability} only guarantees asymptotic stability if the observer employs the plant dynamics. 
%but this alone is insufficient to guarantee asymptotic stability of the system without the restrictive assumption that the derivative of the total disturbance is independent of the plant state. 
An exception is the work in~\cite{xue2013frequency}, which guarantees asymptotic stability for a range of plant parameters for a given bandwidth for linear plants. %This work in particular  
However, this work still does not guarantee desired performance without arbitrarily high gains. In order to guarantee desired stability margins, it still requires the bandwidth to be taken arbitrarily high. 
Also, the results of~\cite{WX-YH:2013} are guarantees for exponential stability of nonlinear plants under some assumptions on the plant dynamics 
%{\color{red} some weird assumption on the plant dynamics} 
and no external disturbance, but its only guarantees for desired performance are bounds on the tracking error of a reference trajectory which depend on high bandwidths. 
In~\cite{jin2023notions}, ADRC is found to stabilize certain plants, without requiring arbitrarily high bandwidths, but it only allows for some particular frequency-domain performance characteristics. 
Additionally, the work in~\cite{freidovich2008performance}, has some overlap with ADRC although it is not labeled as such. One can obtain an ADRC controller from the control form presented in that paper, and the paper is able to guarantee exponential stability, assuming input saturation, under some seemingly less-conservative assumptions on the nonlinear plant. However, it, likewise, relies on the tracking of a trajectory for desired performance, potentially requiring an arbitrarily fast observer. 
%, without providing any guarantees that it will perform sufficiently well outside of that. %this is not true. the sampling rate must be arbitrarily fast for arbitrarily good tracking of the reference signal, but they guarantee EXPONENTIAL stability... but do they actually provide a way to find the rate of convergence??
%wait this is state feedback. what.
%Indeed, to the best of our knowledge, these works do not provide guarantees on performance, outside of simple convergence, without taking the observer gains arbitrarily high. 

On the other hand, while not directly examining ADRC's performance, a similar problem of characterizing ADRC's capabilities is considered in~\cite{zhou2018analysis}, which similarly finds fault with the limitations of the bandwidth parameterization. %and examines ADRC's potential without it. 
It is able to guarantee that ADRC can realize transfer functions in a broad class, and the work~\cite{zhou2020implementation} modifies ADRC so that it can realize still more transfer functions. %However, this does not provide performance guarantees
However, this does not provide insight into how to tune ADRC directly for any performance metric at all, instead requiring the intermediate step of finding a controller transfer function which provides the desired performance and can be realized by ADRC. 

Therefore, the goal of this work is to provide not only asymptotic convergence guarantees, but also guarantees on desired performance, without requiring arbitrarily fast observers, for a class of plants. 
The main contribution of this paper is to show ADRC can asymptotically stabilize a class of third-order linear plants for desired performance with low gains by providing a method to assign the gains, provided that the plant parameters are known. More specifically, we show that arbitrary eigenvalue assignment is possible with proper choice of ADRC's gains for third-order linear plants with relative degree three, subject to a class of disturbances. We additionally show how ADRC can recover the performance of model-based observers by applying the results of~\cite{zhou2018analysis}, and conclude by suggesting relevant research directions. 
%, where the closed-loop system is also linear. 

\section{Notation}
The Euclidean norm of a vector $v \in \real^n$ is denoted by $||v||$. An n-dimensional column vector with every entry equal to $1$ (or $0$) is denoted by $\mathbf{1}_n$ (or $\mathbf{0}_n$). 
Given a vector~$v \in \real^N$, we denote by $\operatorname{diag}(v)$ the $N\times N$ diagonal matrix with the entries of $v$ along its diagonal. 
For a function $f$, its $n$th derivative with respect to time is denoted by $f^{(n)}$, for $n \in \mathbb{Z}_{\geq 0}$. 
The minimum eigenvalue of a square matrix $A$ is given by $\operatorname{eigmin}(A)$ and its maximum eigenvalue is given by $\operatorname{eigmax}(A)$. For a square matrix $A \in \mathbb{R}^{N \times N}$, let $\text{eig}(A)$ denote the set of the eigenvalues of $A$. 
For a complex number $p\in \mathbb{C}$, let $\text{Im}\lbrace p \rbrace$ denote its imaginary part and $\text{Re}\lbrace p \rbrace$ denote its real part. 
%The Lie derivative of a scalar function $h(z) \in \mathbb{R}$ with respect to a vector function $f(z) \in \mathbb{R}^n$ and $z\in \mathbb{R}^n$ for $n \in \mathbb{Z}_{>0}$, is 
%\begin{align*}
%\mathcal{L}_{f(z)} h(z) = \left( \nabla_z h(z) \right) f(z),
%\end{align*}
%which is the dot product between the gradient of $h$ with respect to $z$ and the function $f$. Defining $\mathcal{L}_{f(z)}^0 h(z) = h(z)$, we can denote repeated Lie derivatives by 
%\begin{align*}
%\mathcal{L}_{f(z)}^{k+1} h(z) = \mathcal{L}_{f(z)}\left(\mathcal{L}_{f(z)}^k h(z) \right)
%\end{align*}
%for $k \in \mathbb{Z}_{\geq 0}$.

\section{Canonical Form and ADRC}
%\margin{redo this with nonlinear systems}
We consider linear system of dimension $N=3$ and relative degree $\rho = 3$, perturbed by some unknown disturbance $\disturb \in \mathbb{R}^3$, which can be written as
\begin{align}\label{eq:mainLinearPlant}
\dot{z} &= Az + Bu + \disturb,\nonumber\\
y &= Cz
\end{align}
where $z \in \mathbb{R}^3$ is the plant state and $u \in \mathbb{R}$ is the input, $y \in \mathbb{R}$ is the output, and $A\in \mathbb{R}^{3\times 3}$, $B\in \mathbb{R}^{3\times 1}$, and $C\in \mathbb{R}^{1\times 3}$ are the plant parameter matrices. For analysis purposes, we wish to transform this system into the canonical form of feedback linearization~\cite{YH-WX:2014}. Starting from~\eqref{eq:mainLinearPlant}, we define a change of coordinates by taking repeated derivatives of the output, until the result depends explicitly on the input:
\begin{align*}
x_1 &= y = Cz\\
x_2 &= \dot{y} = CAz + CBu + C\disturb\\
x_3 &= y^{(2)} = CA^2z + CABu + CA\disturb + C\dot{\disturb},\\
\dot{x}_3 &= CA^3z + CA^2Bu + CA^2\disturb + CA\dot{\disturb} + C\disturb^{(2)}.
\end{align*}
Note that, by the definition of relative degree, we have $CB=0$, $CAB=0$, and $CA^2B \neq 0$. Note also that we have made some assumption on the differentiability of $\disturb(t)$ with respect to time, and we assume that $\disturb(t)$ and its derivatives are independent of $z$ and $u$. Now $x$ can be written as
\begin{align}
    x &= \mathcal{O}z + \left[\begin{matrix}
0\\
C\\
CA
\end{matrix}\right]\disturb + \left[\begin{matrix}
0\\
0\\
C
\end{matrix}\right] \dot{\disturb},
\end{align}
where $\mathcal{O}$ is the observability matrix. From~\cite{khalil2002nonlinear}, we know that the transformation from $z$ to $x$ will be invertible, which in this case implies that $\mathcal{O}$ is invertible and so
\begin{align*}
z &= \mathcal{O}^{-1}x - \mathcal{O}^{-1}\left[\begin{matrix}
0\\
C\\
CA
\end{matrix}\right]\disturb - \mathcal{O}^{-1}\left[\begin{matrix}
0\\
0\\
C
\end{matrix}\right] \dot{\disturb}.
\end{align*}
Now we can write
\begin{align*}
\dot{x}_1 &= x_2\\
\dot{x}_2 &= x_3\\
\dot{x}_3 &=  CA^3 \mathcal{O}^{-1}x + d(t) + CA^2Bu, 
\end{align*}
where we've defined a disturbance term
\begin{align*}
d(t) &\triangleq \left(CA^2 - CA^3\mathcal{O}^{-1}\left[\begin{matrix}
0\\
C\\
CA
\end{matrix}\right] \right) \disturb \\
&+\left(CA - CA^3\mathcal{O}^{-1}\left[\begin{matrix}
0\\
0\\
C
\end{matrix}\right] \right) \dot{\disturb} + C\disturb^{(2)} .
\end{align*}
Defining the plant parameters 
$\mathbf{a} = \left[ \begin{matrix}
a_1 & a_2 & a_3
\end{matrix}\right] = CA^3\mathcal{O}^{-1}$ and 
$b = CA^2B$ for convenience, we can write
\begin{align}\label{eq:canonical3Dplant}
\dot{x} &= \left[ \begin{matrix}
0 & 1 & 0\\
0 & 0 & 1\\
a_1 & a_2 & a_3
\end{matrix} \right] x + \left[\begin{matrix}
0\\ 0\\ 1 
\end{matrix}\right] d(t) + \left[\begin{matrix}
0\\ 0\\ b 
\end{matrix}\right] u \nonumber\\
y &= \left[ \begin{matrix}
1 & 0 & 0
\end{matrix}\right] x.
\end{align}
This is the canonical form for~\eqref{eq:mainLinearPlant}, and this derivation shows how it and the disturbance $d(t)$ are related to the original plant. 
\begin{remark}[PID Control]\label{rm:PIDisInsufficient}
{\rm 
    Note that PID control cannot stabilize the system~\eqref{eq:canonical3Dplant} in general, in the sense that there exist plant parameters $a_1,a_2,a_3$ such that there do not exist stabilizing PID gains. See the appendix for more specific mathematics and a proof of this claim. 
    } \oprocend
\end{remark}
Instead, to control this, we consider ADRC with a linear extended state observer (ESO) as follows
\begin{align}\label{eq:ADRC3Dcontrol}
u &= -\frac{1}{\hat{b}}(K\hat{x} + \hat{d})\nonumber\\
\left[\begin{matrix}
\dot{\hat{x}}\\
\dot{\hat{d}}
\end{matrix} \right] &= 
%\left[ \begin{matrix}
%\mathbf{0}_{3} & \mathbf{I}_{3}\\
%0 & \mathbf{0}_{3}^T
%\end{matrix}\right]
\left[ \begin{matrix}
0 & 1 & 0 & 0\\
0 & 0 & 1 & 0\\
0 & 0 & 0 & 1\\
0 & 0 & 0 & 0
\end{matrix}\right] \left[ \begin{matrix}
\hat{x} \\
\hat{d}
\end{matrix} \right]
 + \left[\begin{matrix}
0\\
0\\
\hat{b}\\
0
\end{matrix} \right] u - G \left(\hat{x}_1-y\right),
\end{align}
where $G \in \mathbb{R}^4$ is the observer gain, and $K \in \mathbb{R}^3$ is the controller gain, and $\hat{b}$ is an estimate of the input's coefficient. From the canonical form of the plant, it's apparent that $\hat{x}$ can act as an estimate of $x$, while $\hat{d}$ accounts for the extra terms in $\dot{x}_3$. 

%In this paper, we only consider constant disturbances $d$, so that $\dot{d}=0$. While this is more general, one specific case of this is when $\disturb$ is also constant. An additional simplifying assumption we make is that $\hat{b}=b$, so that the estimate of $b$ used by the observer is equal to the true value. 

%on $\hat{b}$ is restrictive, this will still be useful to demonstrate what kind of performance is possible under ADRC. This will be expanded upon later. 

\begin{remark}[Total Disturbance]\label{rm:totalDisturbance}
{\rm 
    The total disturbance, which is often considered in ADRC literature, is in this case not $d(t)$, but rather
    \begin{align}
        \mathbf{a}x + d(t) + (b-\hat{b})u
    \end{align}
    However, we consider the feedback portion of the disturbance, which depends on the plant state $x$ and input $u$, separately from exogenous disturbance $d(t)$, for analysis purposes. 
    } \oprocend
\end{remark}

\subsection{Closed-Loop System}
Now that we have a system in canonical form and an observer and control law for it, we can investigate its closed-loop properties. 
For the sake of analysis, we consider a constant disturbance term $d$, so that $\dot{d}=0$ and we can write the closed-loop system as linear. 
%Additionally, we initially consider the following assumption for simplicity.
%\begin{assumption}\label{as:correctbhat}
%The estimate of $b$ used by the observer is correct, so that $\hat{b}=b$. 
%\end{assumption}
%While Assumption~\ref{as:correctbhat} seems restrictive, it will be relaxed later without substantially changing the results at the cost of some more complexity in the notation. 
Defining $\tilde{x} \triangleq \hat{x} - x$ and $\tilde{d} \triangleq \frac{b}{\hat{b}} \hat{d} - d$, we calculate $\dot{\Tilde{x}} = \dot{\hat{x}} - \dot{x}$ and $\dot{\tilde{d}} = \frac{b}{\hat{b}}\dot{\hat{d}} - \dot{d} = \frac{b}{\hat{b}}\dot{\hat{d}}$ and derive
\begin{align}\label{eq:ADRC3dCL}
\left[\begin{matrix}
\dot{x}\\
\dot{\tilde{x}}\\
\dot{\tilde{d}}
\end{matrix}\right] &= A_{CL}\left[\begin{matrix}
x\\
\tilde{x}\\
\tilde{d}
\end{matrix}\right],
\end{align}
for some matrix $A_{CL} \in \mathbb{R}^{7 \times 7}$. For the sake of space, we do not write out the matrix $A_{CL}$ in general, but in the case where $\hat{b} = b$, we have 
\begin{align}\label{eq:ADRC3dCLmatrix}
&A_{CL} =\\
& \left[\begin{matrix}
0 & 1 & 0 & 0 & 0 & 0 & 0\\
0 & 0 & 1 & 0 & 0 & 0 & 0\\
a_1-k_1 & a_2-k_2 & a_3-k_3 & -k_1 & -k_2 & -k_3 & -1\\
0 & 0 & 0 & -g_1 & 1 & 0 & 0\\
0 & 0 & 0 & -g_2 & 0 & 1 & 0\\
-a_1 & -a_2 & -a_3 & -g_3 & 0 & 0 & 1\\
0  & 0 & 0 & -g_{4} & 0 & 0 & 0
\end{matrix}\right] \nonumber
.
\end{align}
Note that some nice cancellation has occurred because we have assumed that $\hat{b}=b$. 
Now, unless $a_1=a_2=a_3=0$, this is not upper block triangular, and the separation principle does not hold for designing the gains $K$ and $G$. %for closed loop performance of the plant and observer. 
Note that this occurs because of the mismatch between the plant dynamics and the observer dynamics, where the observer assumes that $a_1=a_2=a_3=0$. 
%From the presence of $a_1$,$a_2$, and $a_3$ in the sixth row of~\eqref{eq:ADRC3dCLmatrix}, it's apparent that the separation principle does not hold for this observer, because of the mismatch between the dynamics of $x$ and $\hat{x}$. %where the former uses $A_c$ and the latter uses $A_c^*$. 
As a result, we cannot use conventional methods to design the control gains $K$ or the observer gains $G$. This motivates the use of parameterizations to ensure desired performance without needing to know the plant parameters $\mathbf{a}$, such as the bandwidth parameterization. However, motivated by concerns such as noise sensitivity and limited sampling rates, we wish to find methods of tuning for desired performance without requiring arbitrarily high observer gains. 

\section{Eigenvalue Assignment with known plant parameters}

In order to see what performance is possible when tuning ADRC's gains $K$ and $G$, we consider this as a pole placement problem and assume that $\mathbf{a}$ is known to the designer. Although $\mathbf{a}$ will not be known in practice, this will show us when it is possible to design ADRC with an ESO as in~\eqref{eq:ADRC3Dcontrol} to place the poles at a desired location. For example, we would like to show that it is possible to ensure stability and desired performance without taking the observer gains arbitrarily high. Note that this still leaves open the problem of how to find these gains, since, in practice, $\mathbf{a}$ will not be known to the designer. 

We assume that we have a set of desired eigenvalues $-p_1^*,\dots,-p_7^*$ for the closed-loop system matrix~\eqref{eq:ADRC3dCLmatrix} and we wish to determine when it is possible to achieve them through the proper choice of $K$ and $G$.  
Note that these desired eigenvalues correspond to a desired characteristic polynomial 
%, if we have some desired eigenvalues $-p_1^*,\dots,-p_7^*$ for the matrix~\eqref{eq:ADRC3dCLmatrix}, those correspond to a desired characteristic polynomial
\begin{align}\label{eq:desiredCharPoly}
    Q^*(s) &\triangleq s^7 +q_6^*s^6 +q_5^*s^5 +q_4^*s^4 \notag\\ &+q_3^*s^3 +q_2^*s^2 +q_1^*s +q_0,
\end{align}
which we assume has real coefficients $q_0^*,\dots,q_6^*\in \mathbb{R}$. 
If we take the roots of~\eqref{eq:desiredCharPoly} to be $-p_1^*,\dots,-p_7^*$, the eigenvalue assignment problem is equivalent to matching the characteristic polynomial of~\eqref{eq:ADRC3dCLmatrix} to~\eqref{eq:desiredCharPoly}. 
We formalize this problem as follows. 
%Note that, although we assume that the plant parameters $a$ are known in this section, they would not be known in practice. 

\begin{problem}[ADRC Eigenvalue Assignment]\label{prob:ADRCpolePlacementInitial}
%For the plant dynamics in~\eqref{eq:canonical3Dplant} and the input defined by~\eqref{eq:ADRC3Dcontrol} and 
For a given $\mathbf{a}$, find $K \in \mathbb{R}^3$ and $G \in \mathbb{R}^4$ such that the characteristic polynomial of the closed-loop system matrix~\eqref{eq:ADRC3dCLmatrix} is equal to a given desired polynomial as defined in~\eqref{eq:desiredCharPoly}, $Q^*(s)$ %with real coefficients, 
which can be written as
\begin{align}
    \text{det}\left(A_{CL} - I_7s \right) &= Q^*(s).
\end{align}
%are equal to given desired closed-loop eigenvalues $-p_1^*,-p_2^*,\dots,-p_7^* \in \mathbb{C}$.
%, with $\dot{d} = 0$ and of $\hat{b}=b$. 
\end{problem}

We wish to determine under what circumstances Problem~\ref{prob:ADRCpolePlacementInitial} has a solution. 
However, before attempting to solve this problem, we review the nominal eigenvalue assignment problem, where $a_1=a_2=a_3=0$. 
%In this case, the closed-loop matrix~\eqref{eq:ADRC3dCLmatrix} becomes
%\begin{align}\label{eq:ADRC3dCLmatrixNom}
%&\hat{A}_{CL} =\nonumber\\
%& \left[\begin{matrix}
%0 & 1 & 0 & 0 & 0 & 0 & 0\\
%0 & 0 & 1 & 0 & 0 & 0 & 0\\
%-k_1 & -k_2 & -k_3 & -k_1 & -k_2 & -k_3 & -1\\
%0 & 0 & 0 & -g_1 & 1 & 0 & 0\\
%0 & 0 & 0 & -g_2 & 0 & 1 & 0\\
%0 & 0 & 0 & -g_3 & 0 & 0 & 1\\
%0  & 0 & 0 & -g_{4} & 0 & 0 & 0
%\end{matrix}\right] 
%\end{align}

This can similarly be formalized as follows. 

\begin{problem}[Nominal Eigenvalue Assignment]\label{prob:ADRCpolePlacementInitialNom}
%For the plant dynamics in~\eqref{eq:canonical3Dplant} and the input defined by~\eqref{eq:ADRC3Dcontrol} with $a = \mathbf{0}$ and given desired closed-loop eigenvalues $-\hat{p}_1^*,-\hat{p}_2^*,\dots,-\hat{p}_7^* \in \mathbb{C}$ 
Find $K \in \mathbb{R}^3$ and $G \in \mathbb{R}^4$ such that the characteristic polynomial of the nominal closed-loop system matrix~\eqref{eq:ADRC3dCLmatrix}, with $\mathbf{a}=0$, is equal to a given desired polynomial $Q^*(s)$ with real coefficients. 
\end{problem}

Note that~\eqref{eq:ADRC3dCLmatrix} is upper block triangular when $\mathbf{a}=0$, so we can apply the separation principle, and determine whether~\ref{prob:ADRCpolePlacementInitialNom} can be solved by examining the diagonal blocks.

\begin{lemma}[Nominal Eigenvalue Assignment]\label{lm:nomEigvalAssign}
For all $Q^*(s)$ as defined in~\eqref{eq:desiredCharPoly},
%valid $-\hat{p}_1^*,-\hat{p}_2^*,\dots,-\hat{p}_7^* \in \mathbb{C}$ %(such that any complex values come in complex conjugate pairs), 
there exist $K \in \mathbb{R}^3$ and $G \in \mathbb{R}^4$ to solve Problem~\ref{prob:ADRCpolePlacementInitialNom}. 
\end{lemma}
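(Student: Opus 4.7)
The plan is to exploit the upper block triangular structure of the nominal closed-loop matrix in~\eqref{eq:ADRC3dCLmatrix}. Setting $\mathbf{a}=\mathbf{0}$ in~\eqref{eq:ADRC3dCLmatrix} makes the $(6,7)\times(1,2,3)$ off-diagonal block $-[a_1\ a_2\ a_3]^\top$ vanish, leaving $A_{CL}$ block upper triangular with a $3\times 3$ diagonal block governing the plant/controller dynamics and a $4\times 4$ diagonal block governing the observer error. Thus $\det(A_{CL}-sI_7)$ factors as the product of the two blocks' characteristic polynomials, and I will check that the plant/controller block is in controllable canonical form with characteristic polynomial $s^3+k_3 s^2+k_2 s + k_1$, while the observer-error block is in observable canonical form with characteristic polynomial $s^4+g_1 s^3+g_2 s^2+g_3 s + g_4$.

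Next I would reduce the problem to the following purely algebraic question: given the monic real polynomial $Q^*(s)$ of degree $7$, can we write $Q^*(s) = Q_K(s)\,Q_G(s)$ with $Q_K$ monic of degree $3$ and $Q_G$ monic of degree $4$, both having real coefficients? If yes, then reading off the coefficients of $Q_K$ produces $k_1,k_2,k_3$ and those of $Q_G$ produce $g_1,\ldots,g_4$, solving Problem~\ref{prob:ADRCpolePlacementInitialNom}.

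To answer the algebraic question, I would invoke the standard fact that every real polynomial factors over $\mathbb{R}$ into linear and irreducible quadratic factors, so the $7$ roots $-p_1^*,\ldots,-p_7^*$ come as $r$ real roots and $c$ complex-conjugate pairs with $r+2c=7$; in particular $r$ is odd, so $r\in\{1,3,5,7\}$. I would then show by a short case analysis that we can always assign the roots to a group of size $3$ and a group of size $4$ without splitting any conjugate pair: for instance, place one real root together with (up to) one conjugate pair in $Q_K$, and distribute the rest to $Q_G$. Checking $r=1,3,5,7$ separately (only $r=1$ is tight, forcing one real plus one conjugate pair into $Q_K$ and the remaining two conjugate pairs into $Q_G$) shows a valid split always exists, and each resulting factor has real coefficients by construction.

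I do not anticipate a hard obstacle; the only point requiring care is the conjugate-pair bookkeeping in the $r=1$ case, since a naive split could leave a conjugate pair straddling the two factors and break the realness of the coefficients. The controllable/observable canonical structure of the two diagonal blocks does the rest of the work, so no genuine coupling between $K$ and $G$ has to be resolved, in contrast with the full problem in~\eqref{eq:ADRC3dCLmatrix} where $\mathbf{a}\neq \mathbf{0}$.
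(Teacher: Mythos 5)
Your proof is correct and rests on the same decomposition as the paper's: with $\mathbf{a}=\mathbf{0}_3$ the closed-loop matrix~\eqref{eq:ADRC3dCLmatrix} is upper block triangular, so its spectrum splits into that of the $3\times 3$ plant/controller block and the $4\times 4$ observer-error block, which are assigned independently by $K$ and $G$. Where you differ is in how the assignability of each block is justified. The paper invokes controllability of $(\hat{A}_p,\hat{B}_p)$ and observability of $(\hat{A}_e,\hat{C}_e)$ and appeals to the standard pole-placement theorem, then immediately concludes; you instead read the characteristic polynomials $s^3+k_3s^2+k_2s+k_1$ and $s^4+g_1s^3+g_2s^2+g_3s+g_4$ directly off the companion and observer canonical forms (both of which check out) and reduce the lemma to factoring $Q^*(s)$ into a real monic cubic times a real monic quartic, which you settle by the conjugate-pair case analysis on $r\in\{1,3,5,7\}$. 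That last step is the genuine added value: the paper's proof asserts that each block's eigenvalues can be placed ``arbitrarily'' but never addresses why the seven desired roots can always be partitioned into conjugate-closed sets of sizes $3$ and $4$, which is exactly what is needed for the two factors --- equivalently, for the gains $K$ and $G$ --- to be real. Your version is more elementary (no pole-placement machinery, the gains are literally the polynomial coefficients) and closes that small gap; the paper's version is shorter and stays in the controllability/observability language it reuses later when the gains are actually computed via Ackermann's formula.
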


\begin{proof}
Because~\eqref{eq:ADRC3dCLmatrix} is upper block triangular when $\mathbf{a}=0$, its eigenvalues will be the eigenvalues of the diagonal blocks
\begin{align*}
\left[\begin{matrix}
0 & 1 & 0 \\
0 & 0 & 1 \\
-k_1 & -k_2 & -k_3 
\end{matrix}\right],
\left[\begin{matrix}
-g_1 & 1 & 0 & 0\\
-g_2 & 0 & 1 & 0\\
-g_3 & 0 & 0 & 1\\
-g_4 & 0 & 0 & 0
\end{matrix}\right] ,
\end{align*}
where the first matrix corresponds to the plant and the second to the observer error. To see if the eigenvalues of these can be placed arbitrarily with proper choice of $K,G$ we turn to observability and controllability. Defining 
\begin{align}\label{eq:observerPlant}
\hat{A}_{p} &\triangleq \left[\begin{matrix}
0 & 1 & 0\\
0 & 0 & 1\\
0 & 0 & 0
\end{matrix}\right],%\nonumber\\
&\hat{B}_{p} &\triangleq \left[\begin{matrix}
0\\
0\\
1
\end{matrix}\right],
\end{align}
we have that $(\hat{A}_{p}$, $\hat{B}_{p})$ is controllable, which implies that the eigenvalues of the plant block can be placed arbitrarily with proper choice of $K\in \mathbb{R}^3$~\cite{ogata2010modern}. 
Since the observer uses an extended state, we must consider a different matrix for observability and so we similarly define \begin{align}\label{eq:observerError}
\hat{A}_{e} &\triangleq \left[ \begin{matrix}
0 & 1 & 0 & 0\\
0 & 0 & 1 & 0\\
0 & 0 & 0 & 1\\
0 & 0 & 0 & 0
\end{matrix}\right], %\nonumber\\
&\hat{C}_{e} &\triangleq \left[ \begin{matrix}
1 & 0 & 0 & 0
\end{matrix}\right],
\end{align}
and we have that $(\hat{A}_{e}$, $\hat{C}_{e})$ is observable, so the eigenvalues of the observer error block can similarly be placed arbitrarily with proper choice of $G \in \mathbb{R}^4$~\cite{ogata2010modern}. Therefore, Problem~\ref{prob:ADRCpolePlacementInitialNom} can be solved for any $Q^*(s)$ with real coefficients. 
%valid $-\hat{p}_1^*,-\hat{p}_2^*,\dots,-\hat{p}_7^* \in \mathbb{C}$. 
\end{proof}

%the eigenvalues of~\eqref{eq:ADRC3dCLmatrix} can be placed arbitrarily with the proper choice of $k_1,k_2,g_1,g_2,g_3$ by conventional methods, when $a_1=a_2=a_3=0$. 

Before moving on, note the characteristic polynomial of this nominal closed-loop matrix is
\begin{align*}
&\text{det}\left(A_{CL} - I_7s \right) =\\
& s^7 + \hat{q}_6s^6 + \hat{q}_5s^5 + \hat{q}_4s^4 + \hat{q}_3s^3 + \hat{q}_2s^2 + \hat{q}_1s + \hat{q}_0,
\end{align*}
when $\mathbf{a} = \mathbf{0}_3$, and by matching coefficients, we have the following system of equations 
\begin{align}\label{eq:nomCharPolyCoefficients}
q_6^* &= \hat{q}_6 \triangleq g_1 + k_3 \notag\\
q_5^* &= \hat{q}_5 \triangleq g_2 + k_2 + g_1k_3 \notag\\
q_4^* &= \hat{q}_4 \triangleq g_3 + k_1 + g_1k_2 + g_2k_3\notag\\
q_3^* &= \hat{q}_3 \triangleq g_4 + g_1k_1 + g_2k_2 + g_3k_3\notag\\
q_2^* &= \hat{q}_2 \triangleq g_2k_1 + g_3k_2 + g_4k_3\notag\\
q_1^* &= \hat{q}_1 \triangleq g_3k_1 + g_4k_2\notag\\
q_0^* &= \hat{q}_0 \triangleq g_4k_1 .
\end{align}
Therefore, the nominal eigenvalue placement problem can be written as finding $K\in \mathbb{R}^3,G\in \mathbb{R}^4$ such that~\eqref{eq:nomCharPolyCoefficients} is satisfied. 
%the following equations, derived from matching the coefficients defined in~\eqref{eq:nomCharPolyCoefficients} to the ones of~\eqref{eq:desiredCharPoly}, are satisfied
%\begin{align*}
%q_6^* &= g_1 + k_3 \notag\\
%q_5^* &= g_2 + k_2 + g_1k_3 \notag\\
%q_4^* &= g_3 + k_1 + g_1k_2 + g_2k_3\notag\\
%q_3^* &= g_4 + g_1k_1 + g_2k_2 + g_3k_3\notag\\
%q_2^* &= g_2k_1 + g_3k_2 + g_4k_3\notag\\
%q_1^* &= g_3k_1 + g_4k_2\notag\\
%q_0^* &= g_4k_1 .
%\end{align*}
Therefore, this system of equations has a solution for $K,G$ iff~\ref{prob:ADRCpolePlacementInitialNom} has a solution, and from~\ref{lm:nomEigvalAssign} we know that the latter has a solution. 
As a result, this system of equations has at least one solution for $K\in \mathbb{R}^3,G\in \mathbb{R}^4$, for any real $q_0^*,\dots,q_6^*$. 

We now return to considering the general closed-loop matrix $A_{CL}$ with $\mathbf{a} \neq \mathbf{0}_3^T$ and Problem~\ref{prob:ADRCpolePlacementInitial}. Note that we cannot use the same technique to show that it is possible to place its eigenvalues at desired locations, with proper choice of $K\in \mathbb{R}^3,G\in \mathbb{R}^4$, because it is not upper block triangular in general. 
However, we can similarly find that the characteristic polynomial of the closed-loop matrix is
\begin{align*}
&\text{det}\left(A_{CL} - I_7s \right) = \\
&s^7 + q_6s^6 + q_5s^5 +q_4s^4 +q_3s^3 +q_2s^2 +q_1s +q_0,
\end{align*}
and by matching the coefficients with those of the desired closed-loop characteristic polynomial~\eqref{eq:desiredCharPoly}, we have the following system of equations
\begin{align}\label{eq:mainPolePlaceCoeffMatch}
    q_6^* &= q_6 \triangleq g_1 + k_3 - a_3 \notag\\
    q_5^* &= q_5 \triangleq g_2 + k_2 + g_1k_3 - a_2 - a_3(g_1 + k_3) \notag\\
    q_4^* &= q_4 \triangleq g_3 + k_1 + g_1k_2 + g_2k_3 - a_1 \notag\\
    &- a_2(g_1 + k_3) - a_3(k_2 + g_2 + g_1k_3) \notag\\
    q_3^* &= q_3 \triangleq \frac{b}{\hat{b}}\left(g_4 + g_1k_1 + g_2k_2 + g_3k_3\right) - a_1(g_1 + k_3) \notag\\
    &- a_2(g_2 + k_2 + g_1k_3) - a_3(g_3 + k_1 + g_1k_2 + g_2k_3) \notag\\
    q_2^* &= q_2 \triangleq \frac{b}{\hat{b}}\left(g_2k_1 + g_3k_2 + g_4k_3\right) - a_1(g_2 + k_2 + g_1k_3) \notag\\
    &- a_2(g_3 + k_1 + g_1k_2 + g_2k_3) \notag\\
    q_1^* &= q_1 \triangleq \frac{b}{\hat{b}}\left(g_3k_1 + g_4k_2\right) - a_1(g_1k_2 + g_3 + k_1 + g_2k_3) \notag\\
    q_0^* &= q_0 \triangleq \frac{b}{\hat{b}}g_4k_1 .
\end{align}
%If our desired closed-loop polynomial is given by~\eqref{eq:desiredCharPoly}, then 
The eigenvalue assignment problem becomes one of finding $K\in \mathbb{R}^3,G\in \mathbb{R}^4$ to satisfy~\eqref{eq:mainPolePlaceCoeffMatch}. 
%\begin{align}\label{eq:mainPolePlaceCoeffMatch}
%    q_6^* &= g_1 + k_3 - a_3 \notag\\
%    q_5^* &= g_2 + k_2 + g_1k_3 - a_2 - a_3(g_1 + k_3) \notag\\
%    q_4^* &= g_3 + k_1 + g_1k_2 + g_2k_3 - a_1 \notag\\
%    &- a_2(g_1 + k_3) - a_3(k_2 + g_2 + g_1k_3) \notag\\
%    q_3^* &= g_4 + g_1k_1 + g_2k_2 + g_3k_3 - a_1(g_1 + k_3) \notag\\
%    &- a_2(g_2 + k_2 + g_1k_3) - a_3(g_3 + k_1 + g_1k_2 + g_2k_3) \notag\\
%    q_2^* &= g_2k_1 + g_3k_2 + g_4k_3 - a_1(g_2 + k_2 + g_1k_3) \notag\\
%    &- a_2(g_3 + k_1 + g_1k_2 + g_2k_3) \notag\\
%    q_1^* &= g_3k_1 + g_4k_2 - a_1(g_1k_2 + g_3 + k_1 + g_2k_3) \notag\\
%    q_0^* &= g_4k_1 .
%\end{align}
Having set up the system of equations that must be solved, we are ready to present our main result. 

\begin{theorem}[Arbitrary Eigenvalue Assignment]\label{th:eigvalAssignment}
Given the plant dynamics in~\eqref{eq:canonical3Dplant} and the input defined by~\eqref{eq:ADRC3Dcontrol}, with $\dot{d}=0$% and under Assumption~\ref{as:correctbhat}
, for any desired characteristic polynomial $Q^*(s)$~\eqref{eq:desiredCharPoly}, corresponding to desired closed-loop eigenvalues, 
there exist $K \in \mathbb{R}^3$ and $G \in \mathbb{R}^4$ such that characteristic polynomial of the closed-loop system matrix~$A_{CL}$
%\eqref{eq:ADRC3dCLmatrix} 
is equal to the desired one. Mathematically, this can be written as
\begin{align*}
    \text{det}\left(A_{CL} - I_7s \right) = Q^*(s) .
\end{align*}
\end{theorem}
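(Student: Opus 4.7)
The plan is to reduce Problem~\ref{prob:ADRCpolePlacementInitial} to the already-solved nominal Problem~\ref{prob:ADRCpolePlacementInitialNom} by exploiting the triangular structure of the coefficient-matching system~\eqref{eq:mainPolePlaceCoeffMatch}. The essential observation is that each equation in~\eqref{eq:mainPolePlaceCoeffMatch} expresses $q_i$ as a perturbation of the nominal quantity $\hat{q}_i$ by a combination of $a_1,a_2,a_3$, the ratio $b/\hat{b}$, and \emph{strictly higher-index} nominal quantities $\hat{q}_j$ with $j>i$. Thus, given any real desired coefficients $q_0^*,\dots,q_6^*$, I can sequentially solve for target nominal coefficients $\hat{q}_0^*,\dots,\hat{q}_6^*$ without ever having to invert a nonlinear constraint.

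Concretely, I would first invert the top three equations of~\eqref{eq:mainPolePlaceCoeffMatch},
\begin{align*}
\hat{q}_6^* &= q_6^* + a_3, \\
\hat{q}_5^* &= q_5^* + a_2 + a_3\hat{q}_6^*, \\
\hat{q}_4^* &= q_4^* + a_1 + a_2\hat{q}_6^* + a_3\hat{q}_5^*,
\end{align*}
and then continue the back-substitution through the remaining four equations, applying an overall factor of $\hat{b}/b$ where the coefficient $b/\hat{b}$ appears in~\eqref{eq:mainPolePlaceCoeffMatch}. Since $a_1,a_2,a_3,b,\hat{b},q_i^*$ are all real and $b,\hat{b}\neq 0$ (the relative degree is three and the control law~\eqref{eq:ADRC3Dcontrol} is well-defined), every $\hat{q}_i^*$ produced by this recursion is a real number.

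Next, I would invoke Lemma~\ref{lm:nomEigvalAssign} on the real polynomial $\hat{Q}^*(s)=s^7+\hat{q}_6^*s^6+\cdots+\hat{q}_0^*$ to obtain some $K\in\mathbb{R}^3$ and $G\in\mathbb{R}^4$ for which the nominal characteristic coefficients satisfy $\hat{q}_i(K,G)=\hat{q}_i^*$ for each $i$. Substituting these values into the original equations~\eqref{eq:mainPolePlaceCoeffMatch} then yields $q_i(K,G)=q_i^*$ for every $i$, so $\det(A_{CL}-I_7 s)=Q^*(s)$, as required.

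The step I expect to be most delicate is not the final substitution but the initial bookkeeping: identifying and verifying the strict triangularity of~\eqref{eq:mainPolePlaceCoeffMatch} in the $\hat{q}_j$'s, and correctly tracking the $b/\hat{b}$ factor that appears in the lower four equations. Once that structure is established, Lemma~\ref{lm:nomEigvalAssign} does essentially all of the remaining work, and crucially no component of $G$ or $K$ needs to be taken large: the magnitudes of the gains are dictated purely by the user-specified $Q^*(s)$ and by $\mathbf{a}$ through the finite polynomial recursion above, which is the central point of the theorem.
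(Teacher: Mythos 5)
Your proposal is correct and follows essentially the same route as the paper's proof: both rewrite the coefficient-matching equations~\eqref{eq:mainPolePlaceCoeffMatch} as a triangular linear system in the nominal coefficients $\hat{q}_0,\dots,\hat{q}_6$, solve it by back-substitution (with the $\hat{b}/b$ factor on the lower four equations) to obtain real targets $\hat{q}_i^*$, and then invoke Lemma~\ref{lm:nomEigvalAssign} on the resulting nominal problem. The recursion you write out for $\hat{q}_6^*,\hat{q}_5^*,\hat{q}_4^*$ matches the paper's verbatim, so no further comparison is needed.
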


\begin{proof}

With the way the terms are grouped in~\eqref{eq:mainPolePlaceCoeffMatch}, we can see that this can be written more simply by substituting in $\hat{q}_0,\dots,\hat{q}_6$, as defined in~\eqref{eq:nomCharPolyCoefficients}, so that 
\begin{align*}
q_6^* &= \hat{q}_6 - a_3\\
q_5^* &= \hat{q}_5 - a_2 - a_3\hat{q}_6\\
q_4^* &= \hat{q}_4 - a_1 - a_2\hat{q}_6 - a_3\hat{q}_5\\
q_3^* &= \frac{b}{\hat{b}}\hat{q}_3 - a_1\hat{q}_6 - a_2\hat{q}_5 - a_3\hat{q}_4 \\
q_2^* &= \frac{b}{\hat{b}}\hat{q}_2 - a_1\hat{q}_5 - a_2\hat{q}_4 \\
q_1^* &= \frac{b}{\hat{b}}\hat{q}_1 - a_1\hat{q}_4 \\
q_0^* &= \frac{b}{\hat{b}}\hat{q}_0 .
\end{align*}
This is now a linear system of equations in $\hat{q}_0,\dots,\hat{q}_6$, and after some algebra and substitutions, defining $\hat{q}_i^*$ 
%with some definitions 
to simplify notation, we have
%\begin{align*}
%\hat{q}_6 &= q_6^* + a_3 \triangleq \hat{q}_6^*\\
%\hat{q}_5 &= q_5^* + a_2 + a_3(q_6^* + a_3) \triangleq \hat{q}_5^*\\
%\hat{q}_4 &= q_4^* + a_1 + a_2(q_6^* + a_3) \\
%&+ a_3(q_5^* + a_2 + a_3(q_6^* + a_3)) \triangleq \hat{q}_4^*\\
%\hat{q}_3 &= q_3^* + a_1(q_6^* + a_3) + a_2(q_5^* + a_2 + a_3(q_6^* + a_3))\\
%&+ a_3(q_4^* + a_1 + a_2(q_6^* + a_3) + a_3(q_5^* + a_2 + a_3(q_6^* + a_3)))  \triangleq \hat{q}_3^*\\
%\hat{q}_2 &= q_2^* + a_1(q_5^* + a_2 + a_3(q_6^* + a_3)) \\
%&+ a_2(q_4^* + a_1 + a_2(q_6^* + a_3) + a_3(q_5^* + a_2 + a_3(q_6^* + a_3)))  \triangleq \hat{q}_2^*\\
%\hat{q}_1 &= q_1^* + a_1(q_4^* + a_1 + a_2(q_6^* + a_3) \\
%&+ a_3(q_5^* + a_2 + a_3(q_6^* + a_3))) \triangleq \hat{q}_1^*\\
%\hat{q}_0 &= \bar{q}_0^*  \triangleq \hat{q}_0^*.
%\end{align*}
\begin{align*}
\hat{q}_6 &= \hat{q}_6^* \triangleq q_6^* + a_3\\
\hat{q}_5 &= \hat{q}_5^* \triangleq q_5^* + a_2 + a_3\hat{q}_6^*\\
\hat{q}_4 &= \hat{q}_4^* \triangleq q_4^* + a_1 + a_2\hat{q}_6^* + a_3\hat{q}_5^*\\
\hat{q}_3 &= \hat{q}_3^* \triangleq \frac{\hat{b}}{b}\left(q_3^* + a_1\hat{q}_6^* + a_2\hat{q}_5^* + a_3\hat{q}_4^* \right)\\
\hat{q}_2 &= \hat{q}_2^* \triangleq \frac{\hat{b}}{b}\left(q_2^* + a_1\hat{q}_5^* + a_2\hat{q}_4^* \right)\\
\hat{q}_1 &= \hat{q}_1^* \triangleq \frac{\hat{b}}{b}\left(q_1^* + a_1\hat{q}_4^*\right)\\
\hat{q}_0 &= \hat{q}_0^* \triangleq \frac{\hat{b}}{b}q_0^* .
\end{align*}

Therefore, this is simply a problem of matching the \emph{nominal} closed-loop characteristic polynomial to some other polynomial, which is determined by the desired closed-loop eigenvalues and $\mathbf{a}$. 

As a result, the eigenvalue assignment problem for the closed-loop matrix $A_{CL}$, where $\mathbf{a} \neq \mathbf{0}_3^T$ in general, is equivalent to an eigenvalue assignment problem for the nominal closed-loop matrix, where $\mathbf{a} = \mathbf{0}_3^T$, with the desired characteristic polynomial given by
\begin{align}\label{eq:desiredNomCharPoly}
s^7 + \hat{q}_6^*s^6 + \hat{q}_5^*s^5 + \hat{q}_4^*s^4 + \hat{q}_3^*s^3 + \hat{q}_2^*s^2 + \hat{q}_1^*s + \hat{q}_0^*.
\end{align}
Therefore, we have written the eigenvalue assignment problem~\ref{prob:ADRCpolePlacementInitial} with the desired characteristic polynomial given by~\eqref{eq:desiredCharPoly}
as a nominal eigenvalue assignment problem~\ref{prob:ADRCpolePlacementInitialNom} with the desired characteristic polynomial given by~\eqref{eq:desiredNomCharPoly}. 
The polynomial~\eqref{eq:desiredNomCharPoly} has real coefficients, so by Lemma~\ref{lm:nomEigvalAssign}, this nominal eigenvalue assignment problem can be solved. 
Therefore, Problem~\ref{prob:ADRCpolePlacementInitial} can be solved for any $Q^*(s)$ with real coefficients. 
%valid desired closed-loop eigenvalues $-p_1^*,-p_2^*,\dots,-p_7^* \in \mathbb{C}$. 

\end{proof}

%, and furthermore we can use conventional methods to find $k_1,k_2,g_1,g_2,g_3$ such that the characteristic polynomial of $\hat{A}_{CL}$ is \eqref{eq:desiredNomPolyToPLaceCLpoles}. 
%As a result, we can claim that there always exist gains $k_1,k_2,g_1,g_2,g_3$ such that the characteristic polynomial of $A_{CL}$ is $Q^*(s)$, and furthermore we can use conventional methods to find those gains, by finding $k_1,k_2,g_1,g_2,g_3$ such that the characteristic polynomial of $\hat{A}_{CL}$ is \eqref{eq:desiredNomPolyToPLaceCLpoles}.

\begin{remark}[Relationship to~\cite{zhou2018analysis}]
{\rm 
    Note that we could have, alternatively, used the results of~\cite{zhou2018analysis} to prove Theorem~\ref{th:eigvalAssignment}. In order to do so, we would consider matching the transfer function of an ADRC controller with a third order ESO, by proper choice of $K$ and $G$, to a desired controller transfer function, which places the closed-loop eigenvalues in desired locations. We would then use their results on ADRC's ability to realize strictly proper controller transfer functions with pure integrators in place of Lemma~\ref{lm:nomEigvalAssign} to guarantee that a solution exists for $K$ and $G$. We do not do so because writing our controller in the frequency domain does not result in simpler presentation, and it does not result in additional insight because~\cite{zhou2018analysis} converts their problem in state space to guarantee a solution. 
    } \oprocend
\end{remark}

To solve the eigenvalue assignment problem~\ref{prob:ADRCpolePlacementInitial} for~\eqref{eq:ADRC3dCLmatrix}, we can find the roots of~\eqref{eq:desiredNomCharPoly}, pick two appropriate poles to assign using $K$ and~\eqref{eq:observerPlant}, and assign the remaining three using $G$ and~\eqref{eq:observerError}. %, and so~\ref{prob:ADRCpolePlacementInitialNom} can be solved 
These gains can be found by conventional methods, such as by using Ackermann's formula. %The only requirement is that any complex desired poles should be in complex conjugate pairs. 
Since this system of equations has a real solution for $K\in\mathbb{R}^3,G\in \mathbb{R}^4$, for a desired characteristic polynomial~\eqref{eq:desiredCharPoly} of the closed-loop system matrix~\eqref{eq:ADRC3dCLmatrix}, the desired closed-loop eigenvalues can be achieved by ADRC. 

\begin{remark}[Non-Uniqueness of Solutions]\label{rm:additionalFreedom}
{\rm
    Note that, after having specified the desired closed-loop poles, there is still an addition degree of freedom, in general. Although~\eqref{eq:desiredNomCharPoly} is uniquely determined by the plant parameters and the desired closed-loop eigenvalues, there are \emph{multiple} nominal problems which we can formulate to find gains to achieve that characteristic polynomial. Specifically, we must choose some roots of~\eqref{eq:desiredNomCharPoly} to assign with $K$ and assign the rest with $G$ for the nominal problem, and that choice is not unique. At this point, it is not clear what effect different choices of roots of~\eqref{eq:desiredNomCharPoly} to assign with $K$ will have on the performance, because because the closed-loop eigenvalues will be the same in any case. 
    } \oprocend
\end{remark}

\begin{remark}[Does the separation principle hold?]\label{rm:separationQuestion}
{\rm
Because we have converted a problem where the separation apparently doesn't hold, to one where it's used in the solution, there's a question that may arise.  Does the separation principle apply here? The answer depends on what exactly is meant by the separation principle applying. While it can now be used to solve for $K$ and $G$ such that the eigenvalues are placed in the desired locations, we do not have any guarantee that any eigenvalues only correspond to the observer error $\hat{x} - x$; instead, each eigenvalue may correspond to both the plant state and observer error. 
This suggests that the separation principle does not hold in the sense of closed-loop performance, for the state and observer error as defined. 
} \oprocend
\end{remark}

We will briefly examine both Remark~\ref{rm:additionalFreedom} and Remark~\ref{rm:separationQuestion} with examples in Section~\ref{sec:sims}.

While the results of this section indicate that ADRC can provide desired performance without requiring arbitrarily high gains, this is not a practical method of choosing the gains $G$ and $K$, because the plant parameters $\mathbf{a}$ will not be known in practice. %However, having established that our desired performance is possible to achieve with the proper choice of $G$ and $K$, we wish to create a parameterization that will allow practitioners to tune for this performance. 
On the other hand, this seems to suggest that, for our particular system, the specific value of $\hat{b}$ is immaterial and need not be a good estimate of $b$, if the closed-loop eigenvalues are our main performance concern. Note that the value of $\hat{b}$ may still have an effect on what states those eigenvalues correspond to. However, without knowing how those are affected or knowing the exact value of $b$, one may simply choose $\hat{b}$ to be some convenient value, such as setting $\hat{b} = \text{sgn}(b)$. 

\subsection{Plants of Arbitrary Relative Degree}

Here, we briefly consider the case where the plant's relative degree can be arbitrary, rather than three. Mathematically, consider a plant with equal order and relative degree ($N=\rho \in \mathbb{Z}_{>0}$, $x\in \mathbb{R}^N$), which with a slight abuse of notation is  
\begin{align}\label{eq:canonical3DplantN}
    \dot{x} &= \left[ \begin{matrix}
        \mathbf{0}_{\rho-1} & & \mathbf{I}_{\rho-1}\\
        & \mathbf{a} & 
    \end{matrix}\right] x + \left[\begin{matrix}
        \mathbf{0}_{\rho-1}\\
        b
    \end{matrix}\right] d
    + \left[\begin{matrix}
        \mathbf{0}_{\rho-1}\\
        b
    \end{matrix}\right] u \notag\\
    y &= \left[ \begin{matrix}
        1 & \mathbf{0}_{\rho-1}
    \end{matrix}\right] x,
\end{align}
where, with a slight abuse of notation, $\mathbf{a} \in \mathbb{R}^\rho$ and $b \neq 0$ are the plant parameters and $d \in \mathbb{R}$ is a constant disturbance. In this case, the controller with an ESO of appropriate order is, again with a slight abuse of notation,  
\begin{align}\label{eq:ADRC3DcontrolN}
u &= -\frac{1}{\hat{b}}(K\hat{x} + \hat{d})\nonumber\\
\left[\begin{matrix}
\dot{\hat{x}}\\
\dot{\hat{d}}
\end{matrix} \right] &= 
%\left[ \begin{matrix}
%\mathbf{0}_{3} & \mathbf{I}_{3}\\
%0 & \mathbf{0}_{3}^T
%\end{matrix}\right]
\left[ \begin{matrix}
\mathbf{0}_\rho & & \mathbf{I}_\rho\\
& \mathbf{0}^T_{\rho+1} &
\end{matrix}\right] \left[ \begin{matrix}
\hat{x} \\
\hat{d}
\end{matrix} \right]
 + \left[\begin{matrix}
\mathbf{0}_{\rho-1}\\
\hat{b}\\
0
\end{matrix} \right] u - G \left(\hat{x}_1-y\right),
\end{align}
where now $\hat{x} \in \mathbb{R}^\rho$, $K \in \mathbb{R}^\rho$, $G \in \mathbb{R}^{\rho+1}$. As before, we can write the closed-loop system as a linear one, so that
\begin{align}
\left[\begin{matrix}
\dot{x}\\
\dot{\tilde{x}}\\
\dot{\tilde{d}}
\end{matrix}\right] &= A_{CL}\left[\begin{matrix}
x\\
\tilde{x}\\
\tilde{d}
\end{matrix}\right],    
\end{align}
where, as before, $\tilde{x} = \hat{x} - x$ and $\tilde{d} = \frac{b}{\hat{b}}\hat{d} - d$, but now $A_{CL} \in \mathbb{R}^{(2\rho + 1) \times (2\rho + 1)}$. Considering again that we have a desired characteristic polynomial, corresponding to some desired eigenvalues, which is, again with abuse of notation, 
\begin{align}\label{eq:desiredCharPolyN}
    Q^*(s) s^{2\rho + 1} + q^*_{2\rho}s^{2\rho} + \dots + q^*_1s + q^*_0,
\end{align}
where $q^*_0, q^*_1,\dots , q^*_{2\rho} \in \mathbb{R}$, we can offer the following conjecture. 
%we can again formulate a problem corresponding to Problem~\ref{prob:ADRCpolePlacementInitial}. 

%\begin{problem}[ADRC Eigenvalue Assignment for Arbitrary $N=\rho$]\label{prob:ADRCpolePlacementN}
%For a given $\mathbf{a} \in \mathbb{R}^{\rho}$, find $K \in \mathbb{R}^\rho$ and $G \in \mathbb{R}^{\rho+1}$ such that the characteristic polynomial of the closed-loop system matrix $A_{CL}$ is equal to a given desired polynomial as defined in~\eqref{eq:desiredCharPolyN}, $Q^*(s)$ 
%which can be written as
%\begin{align}
%    \text{det}\left(A_{CL} - I_7s \right) &= Q^*(s).
%\end{align}
%\end{problem}
%
%While we do not have conclusive results for the above problem at this time, we offer the following conjecture. 

\begin{conjecture}[Plants of arbitrary relative degree]\label{cj:eigvalAssignmentArbitraryOrder}
Given the plant dynamics in~\eqref{eq:canonical3DplantN} and the input defined by~\eqref{eq:ADRC3DcontrolN}, with $\dot{d}=0$% and under Assumption~\ref{as:correctbhat}
, for any desired characteristic polynomial $Q^*(s)$~\eqref{eq:desiredCharPolyN}, corresponding to desired closed-loop eigenvalues, 
there exist $K \in \mathbb{R}^\rho$ and $G \in \mathbb{R}^{\rho+1}$ such that characteristic polynomial of the closed-loop system matrix~$A_{CL}$
%\eqref{eq:ADRC3dCLmatrix} 
is equal to the desired one. Mathematically, this can be written as
\begin{align*}
    \text{det}\left(A_{CL} - \mathbf{I}_{2\rho+1}s \right) = Q^*(s) .
\end{align*}
\end{conjecture}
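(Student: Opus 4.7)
The plan is to mirror the proof of Theorem~\ref{th:eigvalAssignment} step-by-step, with the main effort going into establishing that the algebraic reduction from the ``actual'' coefficient-matching system to a nominal coefficient-matching system continues to hold for arbitrary relative degree $\rho$. First I would establish the arbitrary-$\rho$ analogue of Lemma~\ref{lm:nomEigvalAssign}: when $\mathbf{a}=\mathbf{0}_\rho$ the closed-loop matrix $A_{CL}$ is upper block triangular, with a $\rho\times\rho$ plant block of the form $(\hat A_p,\hat B_p)$ consisting of a chain of $\rho$ integrators with input in the last state, and a $(\rho+1)\times(\rho+1)$ observer-error block of the form $(\hat A_e,\hat C_e)$ consisting of a chain of $\rho+1$ integrators with output taken from the first state. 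Both pairs are controllable and observable respectively by inspection of their Hautus/rank conditions, so any polynomial of degree $\rho$ and $\rho+1$ with real coefficients can be realized via $K\in\mathbb{R}^\rho$ and $G\in\mathbb{R}^{\rho+1}$, and therefore any desired polynomial of degree $2\rho+1$ with real coefficients can be matched in the nominal case.

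Next I would compute $\det(A_{CL}-I_{2\rho+1}s)$ for the general case and match coefficients with $Q^*(s)$ from~\eqref{eq:desiredCharPolyN}. The key structural observation, already visible in~\eqref{eq:mainPolePlaceCoeffMatch} and its rearrangement in the proof of Theorem~\ref{th:eigvalAssignment}, is that the plant parameter vector $\mathbf{a}$ enters the characteristic polynomial only through the bottom row of the plant block and couples the plant states to the observer error states in a very structured way: the coefficient equations take the form
\begin{align*}
q_i^* = \alpha_i \, \hat q_i - \sum_{j>i} \beta_{ij}(\mathbf{a}) \, \hat q_j - \gamma_i(\mathbf{a}),
\end{align*}
where $\hat q_i$ are the nominal ($\mathbf{a}=\mathbf{0}_\rho$) coefficients given by polynomial combinations of $K$ and $G$, $\alpha_i\in\{1,b/\hat b\}$ is nonzero, and $\beta_{ij},\gamma_i$ are polynomials in $\mathbf{a}$ alone. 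This system is upper-triangular in $\hat q_{2\rho},\hat q_{2\rho-1},\dots,\hat q_0$, so it can be solved recursively from the top to yield real numbers $\hat q_i^*$ (as done explicitly for $\rho=3$ in the proof of Theorem~\ref{th:eigvalAssignment}), producing a modified nominal target polynomial with real coefficients.

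The final step is to apply the generalized nominal lemma to the modified target polynomial, which yields the required $K\in\mathbb{R}^\rho$ and $G\in\mathbb{R}^{\rho+1}$; in practice one would partition the roots of this nominal polynomial between the plant block and the observer-error block and use Ackermann's formula on each. The main obstacle I anticipate is verifying the upper-triangular structure of the coefficient map $(\hat q_0,\dots,\hat q_{2\rho})\mapsto(q_0^*,\dots,q_{2\rho}^*)$ for general $\rho$, since this requires a careful combinatorial or inductive argument about how the bottom row of the plant block interacts with the observer error subsystem when expanding the determinant. One natural approach is to perform the expansion via the Schur complement, treating the coupling through $\mathbf{a}$ as a rank-$\rho$ perturbation of the nominal block-triangular matrix, and to show by induction on $\rho$ (or directly using the chain-of-integrators structure shared by the plant and extended observer) that the correction terms depend only on $\hat q_j$ with $j\geq i$ and on $\mathbf{a}$. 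Once that is in place, Conjecture~\ref{cj:eigvalAssignmentArbitraryOrder} follows exactly as Theorem~\ref{th:eigvalAssignment} did.
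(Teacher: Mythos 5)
The statement you are proving is left as a \emph{conjecture} in the paper: the authors explicitly state that proving it is beyond the scope of their work, offering only the remark that they ``do not foresee any obstacles to applying the same methodology'' used for Theorem~\ref{th:eigvalAssignment}. Your proposal is essentially a fleshed-out version of that same anticipated methodology: generalize Lemma~\ref{lm:nomEigvalAssign} to chains of $\rho$ and $\rho+1$ integrators (this part is routine and correct --- both pairs are controllable/observable by inspection), then reduce the general coefficient-matching system to a nominal one by exploiting a triangular structure in the map from the nominal coefficients $\hat q_{2\rho},\dots,\hat q_0$ to the actual coefficients $q_{2\rho}^*,\dots,q_0^*$.

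The genuine gap is the one you name yourself but do not close: the claim that for arbitrary $\rho$ the coefficient equations take the form $q_i^* = \alpha_i\hat q_i - \sum_{j>i}\beta_{ij}(\mathbf{a})\hat q_j - \gamma_i(\mathbf{a})$ with $\alpha_i \neq 0$. This is extrapolated from the $\rho=3$ computation in~\eqref{eq:mainPolePlaceCoeffMatch}, not derived. It is precisely this structural fact about $\det(A_{CL}-sI_{2\rho+1})$ --- that the corrections introduced by $\mathbf{a}$ involve only \emph{higher-order} nominal coefficients, so the system can be solved by forward substitution, and that the diagonal entries remain the nonzero scalars $1$ or $b/\hat b$ rather than something that could vanish for particular $\mathbf{a}$ --- that separates the proved Theorem~\ref{th:eigvalAssignment} from the open Conjecture~\ref{cj:eigvalAssignmentArbitraryOrder}. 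Your suggested routes (Schur complement on the coupling block, or induction on $\rho$ using the shared chain-of-integrators structure) are plausible, but until one of them is carried out the argument is a plan rather than a proof; as written it assumes the conclusion of the only nontrivial step. A secondary, smaller point: you should also verify that the leading coefficient of the characteristic polynomial is still $1$ (i.e., that no $b/\hat b$ factor appears on $s^{2\rho+1}$), since otherwise the matching problem is not of the stated form.
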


Although proving this conjecture is beyond the scope of this current work, we do not foresee any obstacles to applying the same methodology that was used for Theorem~\ref{th:eigvalAssignment} to plants with arbitrary relative degree, provided it is equal to the plant order. 

\section{Stable Time-Varying Disturbance}\label{sec:TVdisturbance}

Up to this point, we have assumed that the external disturbance $d(t)$ is constant. Here, we investigate the effect of a time-varying disturbance, which we assume is generated by a stable dynamical system. Specifically, let
\begin{align}\label{eq:TVdisturb}
    d(t) &= d_{ss} + \gamma(t) 
\end{align}
where $d_{ss} \in \mathbb{R}$ is the constant steady state portion of the disturbance, and $\gamma(t) \in \mathbb{R}$ is the time-varying portion generated by the system
\begin{align}\label{eq:TVdisturbDynamics}
    \dot{\chi} &= f_d(\chi)\nonumber\\
    \gamma(t) &= C_d\chi,
\end{align}
where $\chi \in \mathbb{R}^M$ is the state of the disturbance's system and $f_d(\cdot)$ is a locally Lipschitz function mapping $\mathbb{R}^M$ to $\mathbb{R}^M$ and $C_d \in \mathbb{R}^{1\times M}$. 

With a slight abuse of notation, we redefine $\tilde{d} \triangleq \frac{b}{\hat{b}} \hat{d} - d_{ss}$. Now we can rewrite~\eqref{eq:ADRC3dCL} with the inclusion of the time-varying portion of the disturbance as 
\begin{align}\label{eq:ADRC3dCLTVdisturb}
\left[\begin{matrix}
\dot{x}\\
\dot{\tilde{x}}\\
\dot{\tilde{d}}
\end{matrix}\right] &= A_{CL}\left[\begin{matrix}
x\\
\tilde{x}\\
\tilde{d}
\end{matrix}\right] + B_{CL}C_d\chi ,
\end{align}
where $B_{CL} = [0,0,1,0,0,-1,0]^T$ due to the presence of $\gamma(t)$ in $\dot{x}_3$ and $\dot{\tilde{x}}_3$. If both~\eqref{eq:ADRC3dCLTVdisturb} and~\eqref{eq:TVdisturbDynamics} are stable systems, then we expect the cascaded system to also be stable.

\begin{corollary}[Stable Time-Varying Disturbance]\label{co:stableTVdisturbance}
    Given the system~\eqref{eq:ADRC3dCLTVdisturb} subject to a disturbance generated by the system~\eqref{eq:TVdisturbDynamics}, if $A_{CL}$ is a stable (Hurwitz) matrix and~\eqref{eq:TVdisturbDynamics} is globally asymptotically stable to the origin, then~\eqref{eq:ADRC3dCLTVdisturb} is also globally asymptotically stable to the origin.
\end{corollary}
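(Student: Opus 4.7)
The plan is to exploit the cascade structure: (\ref{eq:TVdisturbDynamics}) is an autonomous driver whose output $\gamma(t) = C_d \chi(t)$ forces the linear time-invariant $\xi$-subsystem (\ref{eq:ADRC3dCLTVdisturb}), where $\xi \triangleq [x^T,\tilde{x}^T,\tilde{d}]^T$. My strategy is (i) to show that the $\xi$-subsystem is input-to-state stable (ISS) with respect to $\gamma$, (ii) to use the GAS assumption on the driver to guarantee that $\gamma(t) \to 0$ and that $\chi$ stays bounded, and (iii) to invoke a standard ISS-cascade result to conclude global asymptotic stability of the full interconnection.

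For step (i), since $A_{CL}$ is Hurwitz, the Lyapunov equation $A_{CL}^T P + P A_{CL} = -I$ admits a unique positive-definite solution $P$. Setting $V(\xi) = \xi^T P \xi$ and differentiating along (\ref{eq:ADRC3dCLTVdisturb}) yields
\begin{align*}
    \dot{V} \leq -\|\xi\|^2 + 2\|P B_{CL} C_d\|\, \|\xi\|\, \|\chi\|,
\end{align*}
and a routine Young's-inequality manipulation produces an ISS estimate
$\|\xi(t)\| \leq \beta(\|\xi(0)\|,t) + \eta(\sup_{0 \leq s \leq t}\|\chi(s)\|)$
with $\beta \in \mathcal{KL}$ and $\eta \in \mathcal{K}_\infty$.

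For step (ii), the assumed GAS of (\ref{eq:TVdisturbDynamics}) gives both boundedness of $\chi(\cdot)$ on $[0,\infty)$ and $\chi(t) \to 0$, hence $\gamma(t) \to 0$. For step (iii), linearity of the $\xi$-subsystem combined with bounded forcing rules out finite escape, so $\xi$ is defined for all $t \geq 0$, and the ISS converging-input-converging-state property then forces $\xi(t) \to 0$. Lyapunov stability of the joint equilibrium $(\xi,\chi)=(0,0)$ follows because the ISS bound above is continuous in $\xi(0)$ while GAS of the driver ensures that small $\chi(0)$ produces uniformly small $\chi(\cdot)$ trajectories; together these give global asymptotic stability of (\ref{eq:ADRC3dCLTVdisturb})--(\ref{eq:TVdisturbDynamics}).

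The main obstacle is not computational but conceptual: the driver is a general nonlinear, only locally Lipschitz, system while the driven subsystem is linear, so one must be careful to invoke a cascade-stability theorem whose hypotheses exactly match this asymmetry. The cleanest route, as outlined above, is to pair a quadratic Lyapunov argument for the Hurwitz $\xi$-subsystem with an off-the-shelf ISS-cascade result; verifying its three premises (ISS of the linear subsystem, forward completeness of the nonlinear driver, vanishing of the interconnection signal) is straightforward from what has already been established.
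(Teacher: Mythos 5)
Your proposal is correct and follows essentially the same route as the paper: the paper likewise observes that the Hurwitz linear subsystem~\eqref{eq:ADRC3dCLTVdisturb} is input-to-state stable with respect to $\chi$ and then invokes the standard ISS cascade result (Lemma 4.7 of~\cite{khalil2002nonlinear}). The only difference is that you spell out the quadratic-Lyapunov justification of the ISS claim, which the paper simply asserts.
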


\begin{proof}
    Because~\eqref{eq:ADRC3dCLTVdisturb} is a stable linear system, treating $\chi$ as the input, we can claim that it is input-to-state stable. The result then follows from the properties of input-to-state stable systems, specifically Lemma 4.7 of~\cite{khalil2002nonlinear}.
\end{proof}

This result indicates that, if we design the ADRC gains $K$ and $G$ such that the closed-loop system is stable, then the system will still be stable when subjected to a class of vanishing time-varying disturbances. 
We will examine the transient effect of this disturbance in the numerical simulations section. %Note that this does not handle cases where $d(t)$ has some dependence on the state $x$. 

\section{Recovering the Performance of Model-Based Observers with Standard ADRC}

While we have shown that ADRC can provide desired performance in the sense that its gains can be chosen to place the closed-loop eigenvalues, one may wonder how this compares to the case where a model of the plant is known and is employed in the observer, so that the separation principle can be employed in the design and analysis of the system. To examine this, we examine the input and output relationship of the controllers by looking at their transfer functions, similarly to~\cite{zhou2018analysis}. The transfer function of~\eqref{eq:ADRC3Dcontrol} is denoted by $H_{ADRC}(s)$, so that, when employing the controller, we have
\begin{align*}
U(s) = -H_{ADRC}(s)Y(s),
\end{align*}
where $U(s)$ and $Y(s)$ are the Laplace transforms of $u$ and $y$. We compare this to a controller employing the plant model, which is 
\begin{align}\label{eq:modelBasedObserver}
u &= -\frac{1}{b}(K^*\hat{x} + \hat{d})\nonumber\\
\left[\begin{matrix}
\dot{\hat{x}}\\
\dot{\hat{d}}
\end{matrix} \right] &= 
%\left[ \begin{matrix}
%\mathbf{0}_{3} & \mathbf{I}_{3}\\
%0 & \mathbf{0}_{3}^T
%\end{matrix}\right]
\left[ \begin{matrix}
0 & 1 & 0 & 0\\
0 & 0 & 1 & 0\\
a_1 & a_2 & a_3 & 1\\
0 & 0 & 0 & 0
\end{matrix}\right] \left[ \begin{matrix}
\hat{x} \\
\hat{d}
\end{matrix} \right]
 + \left[\begin{matrix}
0\\
0\\
b\\
0
\end{matrix} \right] u - G^* \left(\hat{x}_1-y\right),
\end{align}
where $K^* \in \mathbb{R}^3$, $G^* \in \mathbb{R}^4$ are gains chosen for desired closed-loop performance. Note that this differs from a standard observer due to the inclusion of an extended state, but one can still employ the separation principle to examine the performance of the state dynamics and the error dynamics. We use $H^*(s)$ to denote the transfer function of~\eqref{eq:modelBasedObserver}. 

Now, the question we would like to answer is: under what circumstances can the controller~\eqref{eq:ADRC3Dcontrol} have the same transfer function as~\eqref{eq:modelBasedObserver}, or, mathematically, $H_{ADRC}(s)=H^*(s)$? To answer this, we present the following result. 

\begin{theorem}[Model-Based Observer Performance]
The controller and observer~\eqref{eq:ADRC3Dcontrol} 
with transfer function $H_{ADRC}(s)$ can realize the transfer function $H^*(s)$ of the controller and observer~\eqref{eq:modelBasedObserver}, 
in the sense that, for all $b, \hat{b} \neq 0$, $\mathbf{a} \in \mathbb{R}^3$, $K^* \in \mathbb{R}^3$, and $G^* \in \mathbb{R}^4$, there exist $K$ and $G$ such that $H_{ADRC} = H^*(s)$. 
\end{theorem}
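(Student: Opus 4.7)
My plan is to reduce the claim to the realization result of~\cite{zhou2018analysis}, which guarantees that ADRC with an extended state observer can realize any strictly proper rational controller transfer function whose denominator contains a pure integrator factor $s$. It then suffices to show that $H^*(s)$ lies in this class.

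First, I would rewrite the model-based controller~\eqref{eq:modelBasedObserver} in standard controller state-space form $\dot{\xi} = A_c^* \xi + B_c^* y$, $u = C_c^* \xi$, with $\xi = [\hat{x}^T, \hat{d}]^T \in \mathbb{R}^4$, by substituting $u = -(1/b)(K^* \hat{x} + \hat{d})$ into $\dot{\hat{x}}_3$. The critical structural observation is that $\hat{d}$ enters $\dot{\hat{x}}_3$ with coefficient $+1$ and enters $bu$ with coefficient $-1$, so these contributions cancel exactly. Combined with $\dot{\hat{d}} = -g_4^*(\hat{x}_1 - y)$, which depends only on $\hat{x}_1$ and $y$ and not on $\hat{d}$, this makes the fourth column of $A_c^*$ identically zero, whence $\det(sI - A_c^*) = s\, P^*(s)$ for some cubic polynomial $P^*(s)$. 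Strict propriety is immediate since $u$ has no direct feedthrough from $y$, so $H^*(s) = N^*(s)/(s\, P^*(s))$ with $\deg N^* \le 3$---exactly the pure integrator structure required. The same substitution and cancellation applied to~\eqref{eq:ADRC3Dcontrol} shows that $H_{ADRC}(s)$ has identical structural form.

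Having identified $H^*(s)$ as a strictly proper rational function with a pure integrator, the realization result of~\cite{zhou2018analysis} immediately furnishes $K \in \mathbb{R}^3$ and $G \in \mathbb{R}^4$ for which $H_{ADRC}(s) = H^*(s)$. I expect the main obstacle to be the structural verification step---confirming that $H^*(s)$ satisfies the precise hypotheses of~\cite{zhou2018analysis}'s realization theorem, in particular that its denominator degree and pure-integrator factor align with what a fourth-order extended state observer can realize. If additional technical conditions were required, I would fall back on a direct coefficient-matching argument: expand both $H^*(s)$ and $H_{ADRC}(s)$, match the seven coefficients of $N^*/(s P^*)$ against the seven free parameters $k_1, k_2, k_3, g_1, g_2, g_3, g_4$, and argue solvability---paralleling the substitution-based derivation used in the proof of Theorem~\ref{th:eigvalAssignment}, now applied to the controller transfer function rather than the closed-loop characteristic polynomial.
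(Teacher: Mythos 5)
Your proposal is correct and follows essentially the same route as the paper: both establish that $H^*(s)$ and $H_{ADRC}(s)$ are strictly proper transfer functions of the same order with a pure integrator in the denominator, and then invoke the realization result of~\cite{zhou2018analysis} to obtain $K$ and $G$. The only cosmetic difference is that you derive the pure-integrator factor structurally (from the cancellation of $\hat{d}$ in $\dot{\hat{x}}_3$ and the zero fourth column of the controller matrix) whereas the paper exhibits both transfer functions explicitly by computation.
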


\begin{proof}

We can find that
\begin{align*}
H_{ADRC}(s) &= \frac{\hat{q}_3s^3 + \hat{q}_2s^2 + \hat{q}_1s + \hat{q}_0}{bs\left(s^3 + \hat{q}_6s^2 + \hat{q}_5s + \hat{q}_4 \right)},
\end{align*}
where $\hat{q}_6, \hat{q}_5, \dots, \hat{q}_0$ are defined in~\eqref{eq:nomCharPolyCoefficients}. Similarly, we find that 
\begin{align*}
H^*(s) &= \frac{\overline{H}^*(s)}{\underline{H}^*(s)} \\
\overline{H}^*(s) &\triangleq (g_4^* + g_1^*k_1^* + g_2^*k_2^* + g_3^*k_3^*)s^3 \\
&+ (g_2^*k_1^* + g_3^*k_2^* + g_4^*k_3^* - a_3g_4^* - a_3g_1^*k_1^* - a_3g_2^*k_2^* \\
&+ a_1g_1^*k_3^* + a_2g_2^*k_3^*)s^2\\ 
&+ (g_3^*k_1^* + g_4^*k_2^* + a_1g_1^*k_2^* - a_2g_1^*k_1^* + a_1g_2^*k_3^* \\
&- a_3g_2^*k_1^* - a_2g_4^*)s \\
&+ g_4^*(k_1^* - a_1) \\
\underline{H}^*(s) &\triangleq bs\left(
s^3 
+ (g_1^* + k_3^* - a_3)s^2 \right.\\
&+ (g_2^* + k_2^* + g_1^*k_3^* - a_3g_1^* - a_2)s \\
&\left. + (g_3^* + k_1^* + g_1^*k_2^* + g_2^*k_3^* - a_2g_1^* - a_3g_2^* - a_1) \right).
\end{align*}
Note that both $H_{ADRC}(s)$ and $H^*(s)$ have the same form, being strictly proper transfer functions of the same order with a pure integrator, so we simply have to show that there exist $K$ and $G$ such that the coefficients match. Therefore, we can apply the results of~\cite{zhou2018analysis} to say that there exist gains $K$ and $G$ such that ADRC realizes the desired transfer function $H^*(s)$, or, mathematically, $H_{ADRC}(s) = H^*(s)$. 

\end{proof}

Note that this indicates that the additional parameters $\mathbf{a}$, which come from using the plant model in the observer, do not provide any additional flexibility in terms of performance for ADRC, so they can be omitted and the gains $K$ and $G$ can be tuned instead. This may be appropriate in cases where the plant parameters are not known, because there is no need to have an accurate estimate of them to achieve the desired performance. The advantage of including such parameters in the observer may be that it simplifies tuning by allowing one to take advantage of the separation principle.

\section{Simulations}\label{sec:sims}
%Compare our parameterization to existing ones, with sensor noise and/or limited sampling rate. We currently do not have a parameterization, so while we do want to show performance, comparing against a parameterization does not make sense. 

To demonstrate the performance of ADRC, when calculating the gains $K$ and $G$ for desired closed-loop eigenvalues, as well as the effect of the mismatch between the plant and the observer, we perform numerical simulations and observe some example trajectories. See Table~\ref{tab:controllerData} for the parameters of main controllers which we will use throughout this section. 

\begin{table}[htb]%\label{tab:simControllers}
    \centering
    \begin{tabular}{|c|c|c|c|c|}
        \hline
        Controller & $K$ & $G$ & $\hat{b}$ & eig$\left(A_{CL}\right)$\\
        \hline
        Slow & $\left[\begin{matrix}
            0.1513\\
            1.261\\
            1.059
        \end{matrix}\right]^T$ & $\left[\begin{matrix}  
            19.14\\
            161.3\\
            802.7\\
            -4876
        \end{matrix}\right]$ & $-1$ & $\begin{matrix}
            -2,-2.2,\\
            -2.4,-2.6,\\
            -2.8,-3,-3.2
        \end{matrix}$ \\
        \hline
        Fast & $\left[\begin{matrix}
            0.5365\\
            1.788\\
            1.397
        \end{matrix}\right]^T$ & $\left[\begin{matrix}
            25.80\\
            289.2\\
            1858\\
            -13983
        \end{matrix}\right]$ & $-1$ & $\begin{matrix}
            -3,-3.2,\\
            -3.4,-3.6,\\
            -3.8,-4,-4.2
        \end{matrix}$ \\
        \hline
        Bandwidth & $\left[\begin{matrix}
            1.3310\\
            3.63\\
            3.3
        \end{matrix}\right]^T$ & $\left[\begin{matrix}
            32\\
            384\\
            2048\\
            4096
        \end{matrix}\right]$ & $1$ & $\begin{matrix}
            -14.3737,\\
            -9.0600 \pm 6.6661i,\\
            -0.3253 \pm 2.8065i,\\
            -0.0778 \pm 0.6079i
        \end{matrix}$\\
        \hline
    \end{tabular}
    \caption{Parameters and resulting closed-loop eigenvalues for the plant with $a = [4,1,2]$ and $b = -1$ for different controllers}
    \label{tab:controllerData}
\end{table}

\subsection{Basic Performance}
First, we show how ADRC can stabilize an unstable plant to desired specifications. We consider an unstable plant of the form~\eqref{eq:canonical3Dplant} where $\mathbf{a} = [4,1,2]$ and $b = -1$, and we consider that the desired performance is defined by closed-loop eigenvalues of $-2,-2.2,-2.4,-2.6,-2.8,-3,-3.2$. Note that these eigenvalues were chosen to be slightly spaced out to better demonstrate our ability to place them, because attempting to place them too close makes them sensitive to numerical errors in the calculated gains. 

The initial conditions throughout the simulations are $x(0) = [1,0,0]^T$, $\hat{x} = 0$, and $\hat{d} = 0$. We take the estimate of the input coefficient to be $\hat{b}=1$, and we consider a constant disturbance of $d(t)=1$. We choose $K = [0.1513, 1.2608, 1.0586]$ and $G = [19.1414, 161.2754, 802.6627$, $ -4876.5604]^T$ to place the closed-loop eigenvalues in the desired locations. We will refer to this as the slow controller, for reasons which will become apparent later in the section. 
Note that one of the elements of $G$ is negative because of the mismatch between the signs of $\hat{b}$ and $b$. Also note that, besides placing the eigenvalues, we have attempted to choose the gains so that $K$ will be relatively small compared to $G$. The resulting trajectories are shown in Figure~\ref{fig:basicPerformance} (a). This will be used as a nominal case to compare against. 

We then consider the same setup, but with a time-varying disturbance $d(t)$. We consider that the time-varying portion of the disturbance is generated by the following linear system
\begin{align*}
    \dot{\chi} &= \left[\begin{matrix}
        0 & 1\\
        -1 & -0.7
    \end{matrix} \right]\chi \\
    d(t) &= \left[\begin{matrix}
        1 & 0
    \end{matrix}\right]\chi + 1,
\end{align*}
with initial conditions $\chi = [1,1]^T$. This system was chosen because it is stable, but slower than the rest of the system. The resulting trajectories are shown in Figure~\ref{fig:basicPerformance} (b). While this is still apparently stable, we can see that, compared to the case without a disturbance shown in Figure~\ref{fig:basicPerformance} (a), this appears to converge more slowly, indicating that the slow disturbance has an adverse effect on the transient response. 

\begin{figure*}
\subfigure[]{\includegraphics[width=.5\linewidth]{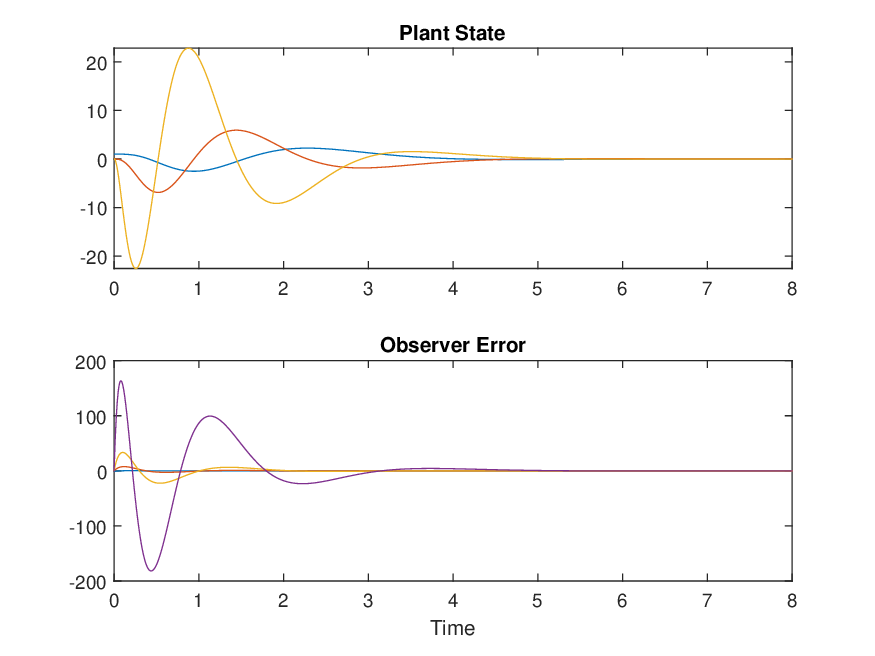}
} \hfill
\subfigure[]{\includegraphics[width=.5\linewidth]{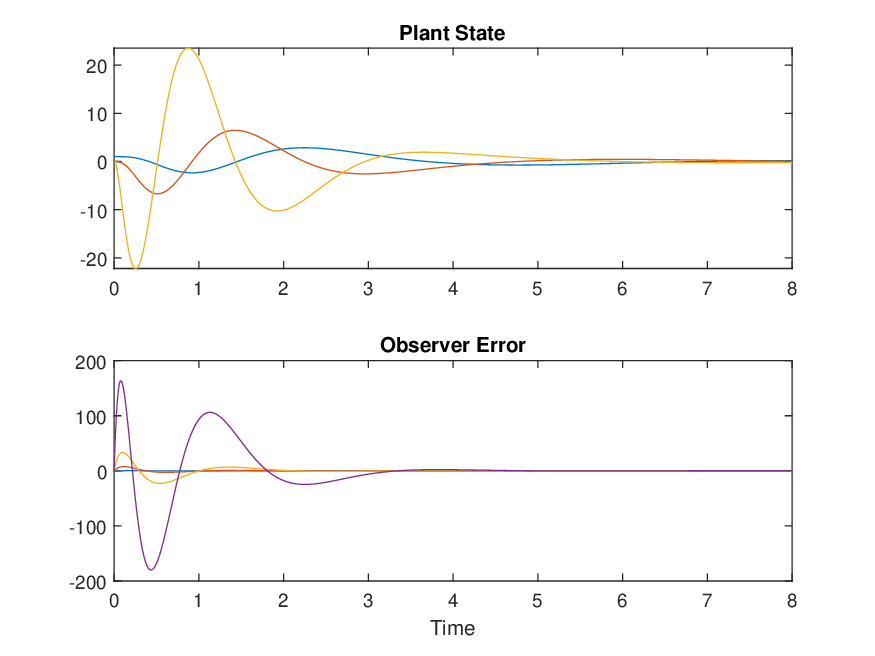}
} \hfill
%\subfigure[]{
%\includegraphics[width=0.3\linewidth]{plots/varySigma}
%}%
\caption{This shows the performance of an ADRC controller on an unstable plant with the closed-loop eigenvalues placed at $-2$, $-2.2$, $-2.4$, $-2.6$, $-2.8$, $-3$, $-3.2$. (a) shows the trajectory under a constant disturbance, while (b) shows the trajectory under a slow-but-stable time-varying disturbance. In both cases, we can see that the system is stable, but that the time-varying disturbance results in slower convergence. } \label{fig:basicPerformance}
\end{figure*}

%\subsection{Demonstration of Remark~\ref{rm:separationQuestion}}
\subsection{Different Closed-Loop Systems with the Same Eigenvalues}

%Here, we show the effect of the plant parameters $a\neq 0$ and $b \neq \hat{b}$, both of which mean that the ESO does not match the plant. 
Here, we investigate the subjects of Remarks~\ref{rm:separationQuestion} and~\ref{rm:additionalFreedom} by leaving the closed-loop eigenvalues fixed, changing other parameters, and looking at example trajectories. 
First, we compare the performance on the unstable plant in the previous section to the performance on an idealized nominal plant where $\mathbf{a}=\mathbf{0}_3^T$ and $b = \hat{b}$, to demonstrate Remark~\ref{rm:separationQuestion}. We consider the same closed-loop eigenvalues of $-2,-2.2,-2.4,-2.6,-2.8,-3,-3.2$, as well as the same initial conditions and estimate of the input coefficient as in the previous section. The disturbance is constant with $d(t) = 1$.
%In both cases, we choose $K$ and $G$ to place the eigenvalues of the relevant plant at $-2,-2.2,-2.4,-2.6,-2.8,-3,-3.2$. 
%Initially, we consider an idealized plant without any unknown dynamics, so that $a = 0$ and $b=\hat{b}=1$, and 
We choose $K = [10.56, 14.48, 6.6]$ and $G = [11.6,50.36,96.976,69.888]^T$ to place the closed-loop eigenvalues in the desired locations. %Note that these gains do not stabilize the actual plant. 
The resulting trajectories are shown in Figure~\ref{fig:idealizedPlant}. 

We can see that, in both cases, the plant state trajectories seem to converge at about the same rate. The observer error, on the other hand, converges more slowly for the unstable plant in Figure~\ref{fig:basicPerformance} (a) compared to the nominal plant in Figure~\ref{fig:idealizedPlant}. This matches our expectation that the observer error may be waiting on the plant states to converge, rather than converging on its own, while the plant may still converge at the same rate. Additionally, note that the input has a higher peak at the start in Figure~\ref{fig:basicPerformance} (a) compared to Figure~\ref{fig:idealizedPlant}, due to the unstable plant dynamics which are not accounted for by the observer. 

This shows how the unknown plant parameters, which cause a mismatch between the plant and the observer, still have some effect on the system, despite the gains being chosen such that the closed-loop eigenvalues are in the same locations. 

Next, we consider a different choice of $K$ and $G$ which lead to the same eigenvalues of $-2$, $-2.2$, $-2.4$, $-2.6$, $-2.8$, $-3$, $-3.2$ for the unstable plant, which will help investigate Remark~\ref{rm:additionalFreedom}. We choose $K = [1538.2, 232.01, 22.312]$ and $G = [-2.1117, -2.0954, -3.8457,$ $ -0.4798]^T$. While in the previous section we chose $K$ to be small and $G$ to be large, here we have done the reverse. Comparing Figure~\ref{fig:basicPerformance} (a) and Figure~\ref{fig:altKandG}, we can see that the state trajectories look very similar, and that the change seems to have mainly affected the observer error trajectories. This may suggest that the additional degree of freedom provided by the non-uniqueness of $K$ and $G$ for a given plant and given closed-loop poles may not have a significant affect on performance. %, in at least some cases. 
%The lack of a separation principle seems to render our intuition about the distinction between having a relatively fast observer and having a relatively fast controller inaccurate, as the input and state trajectories seem to be very similar in both cases. 
There is not necessarily a meaningful distinction between having a relatively fast observer and having a relatively fast controller. 

\begin{figure}
\includegraphics[width=.9\linewidth]{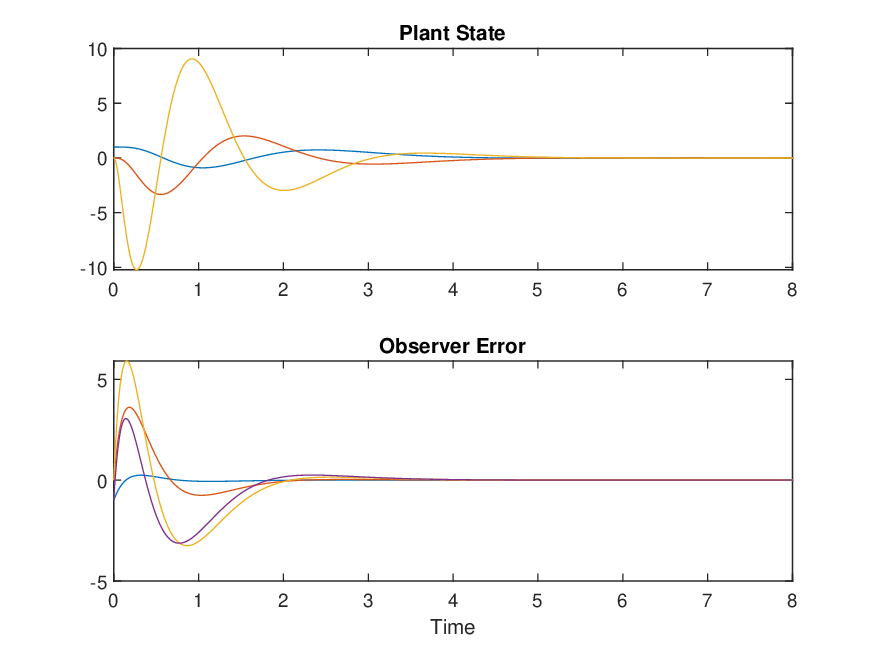}
%\subfigure[]{\includegraphics[width=.5\linewidth]{idealizedPlantState}
%} \hfill
%\subfigure[]{\includegraphics[width=.3\linewidth]{plots/noiseMIET}
%} \hfill
%\subfigure[]{
%\includegraphics[width=0.3\linewidth]{plots/varySigma}
%}%
\caption{This shows the performance of an ADRC controller on an idealized plant with the closed-loop eigenvalues placed at $-2$, $-2.2$, $-2.4$, $-2.6$, $-2.8$, $-3$, $-3.2$. Compared to the unstable plant, we can see that the observer errors converge much faster, because in this case they do not have to wait for the plant state to also converge. } \label{fig:idealizedPlant}
\end{figure}

\begin{figure}
\includegraphics[width=.9\linewidth]{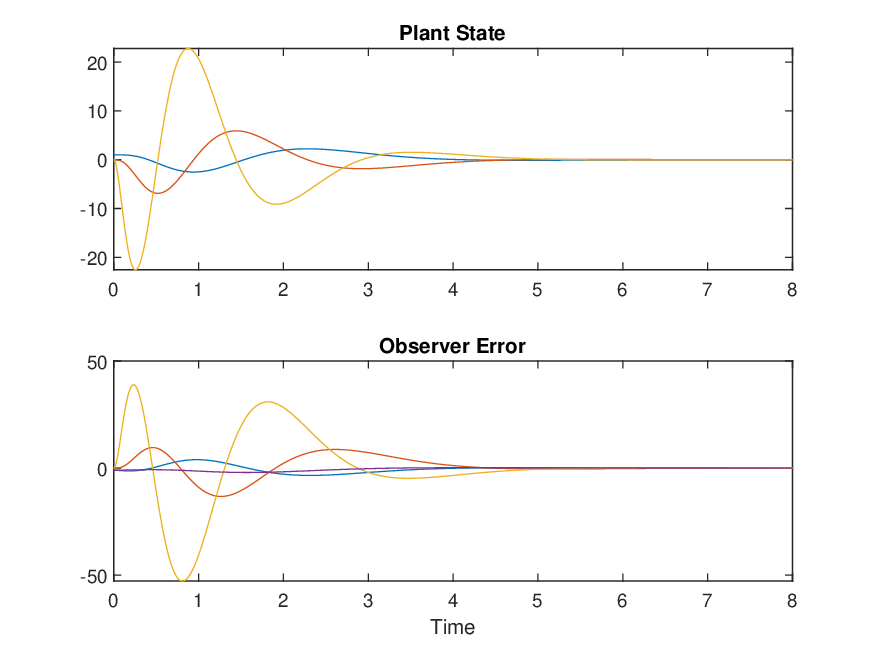}
\caption{This shows the performance of an ADRC controller on the unstable plant with the closed-loop eigenvalues placed at $-2$, $-2.2$, $-2.4$, $-2.6$, $-2.8$, $-3$, $-3.2$, with a ``fast'' controller and a ``slow'' observer. This includes the same constant disturbance of $d(t)=1$. Compared to the trajectories shown in~\ref{fig:basicPerformance}, which used a ``slow'' controller and a ``fast'' observer, we can see that the plant state trajectories are almost unchanged, while the observer error trajectories are quite different. } \label{fig:altKandG}
\end{figure}

\subsection{Time-Varying Marginally Stable Disturbance}

Here, we consider that the time-varying portion of the disturbance does not vanish, as we assumed previously. Instead, we assume that it is generated by a marginally stable linear system which only changes slowly. In particular, the time-varying portion of the disturbance is generated by the following linear system
\begin{align*}
    \dot{\chi} &= \left[\begin{matrix}
        0 & \frac{1}{2}\\
        -\frac{1}{2} & 0
    \end{matrix} \right]\chi \\
    d(t) &= \left[\begin{matrix}
        1 & 0
    \end{matrix}\right]\chi + 1,
\end{align*}
with initial conditions $\chi = [1,0]^T$. This generates a sinusoidal disturbance with a bias of $1$ and an amplitude of $1$. 
We first consider the same slow controller as before, and the resulting trajectories are shown in Figure~\ref{fig:marginalDisturbance} (a). Additionally, we compare to a faster controller where the closed-loop poles are placed at $-3$, $-3.2$, $-3.4$, $-3.6$, $-3.8$, $-4$, $-4.2$ by gains of $K = [0.5365, 1.7878, 1.3966]$ and $G = [25.8034,289.1742,1857.5406,-13983.2560]^T$. The generated trajectories are shown in Figure~\ref{fig:marginalDisturbance} (b). 

We can see that the faster controller does a better job of suppressing the disturbance's effect on the output at steady state, because Figure~\ref{fig:marginalDisturbance} (a) shows a higher magnitude in the output towards the end of the simulation than Figure~\ref{fig:marginalDisturbance} (b). However, as the next subsection will demonstrate, this comes at a cost. 
%However, this faster controller also results in higher control effort at the start, and it has additional downsides as we will demonstrate in the next section. This may hint at a fundamental trade off. 

\begin{figure*}
%\subfigure[]{\includegraphics[width=.5\linewidth]%{}
%} \hfill
\subfigure[]{\includegraphics[width=.5\linewidth]{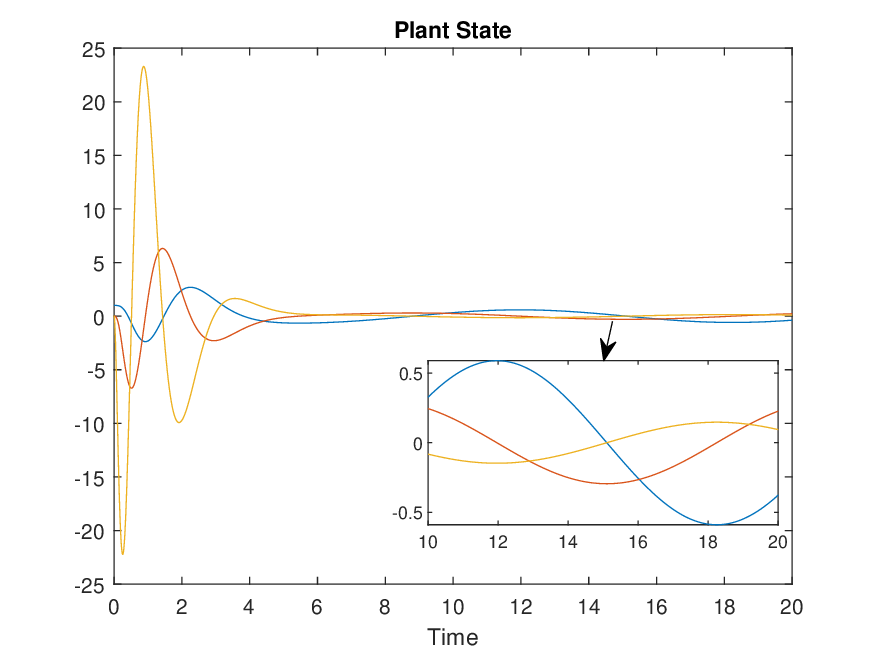}
} \hfill
%\subfigure[]{
%\includegraphics[width=0.5\linewidth]{}
%}%
\subfigure[]{
\includegraphics[width=0.5\linewidth]{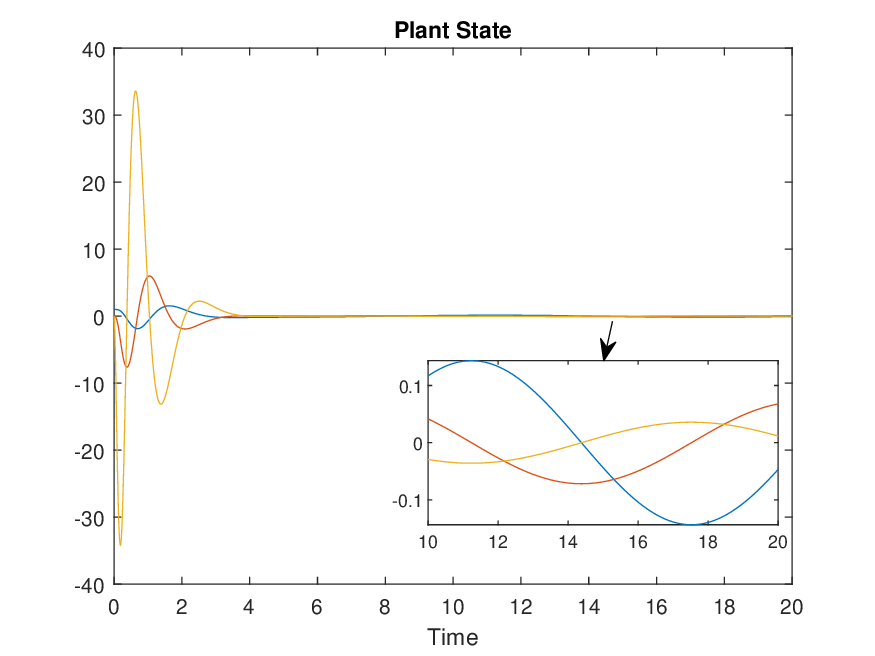}
}%
\caption{This shows the performance of ADRC on the unstable plant subjected to a non-vanishing, slowly time-varying disturbance. In (a), we consider a slower controller, with the closed-loop eigenvalues placed at $-2$, $-2.2$, $-2.4$, $-2.6$, $-2.8$, $-3$, $-3.2$. In (b), we consider a faster controller, with the closed-loop eigenvalues placed at $-3$, $-3.2$, $-3.4$, $-3.6$, $-3.8$, $-4$, $-4.2$. While both controllers attenuate the disturbance, the faster controller keeps the output closer to zero. } \label{fig:marginalDisturbance}
\end{figure*}

\subsection{Demonstration of Robustness}

Here, we show how ADRC can be used to stabilize systems such that the result is robust to noise and sampling rate. 
We consider the same unstable plant and initial conditions, with the same constant disturbance $d(t)=1$. We consider a sampling period of $0.01$, where the input is held constant between samples and the observer is only updated at those sampling times. Additionally, we consider that the plant output is corrupted by noise, so that the observer gets $\hat{y} = y + w$, where $w$ is Gaussian white noise with variance $0.0001$. We show the performance of both the slower controller and the faster controller in Figure~\ref{fig:noiseAndSamplingSlow} and Figure~\ref{fig:noiseAndSamplingFast}, respectively. 

Although both controllers are able to bring the state close to zero and keep it there, the slower one seems to be less affected by the noise, and the faster one continues to have a large input in steady state. 
On the other hand, the faster controller, with higher gains, results in inputs with higher magnitudes at the start and is more vulnerable to noise, while it is better able to handle the non-vanishing, time-varying disturbance in the previous subsection. 
%This may hint at a fundamental trade off. 

\begin{figure*}
\subfigure[]{\includegraphics[width=.5\linewidth]{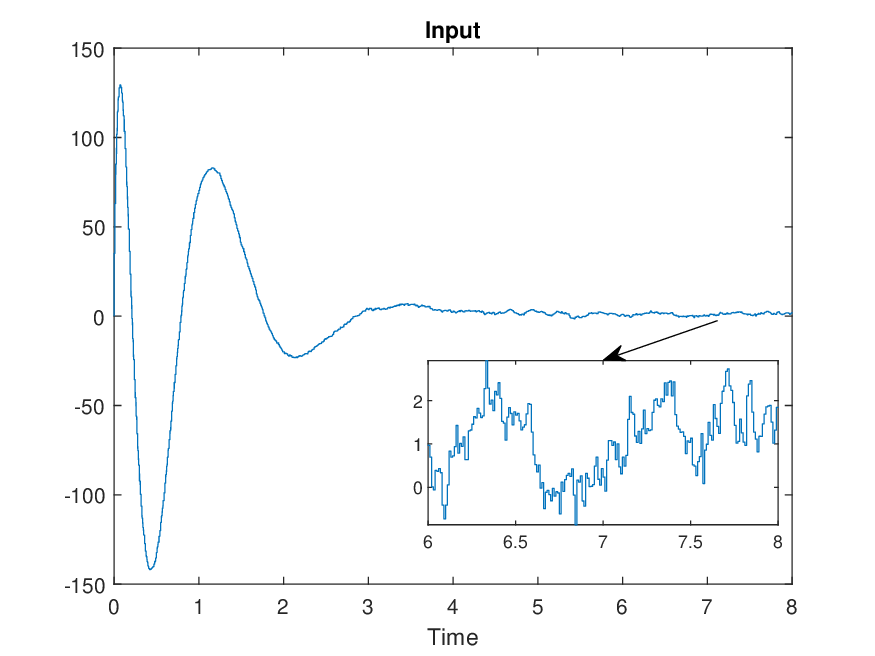}
} \hfill
\subfigure[]{\includegraphics[width=.5\linewidth]{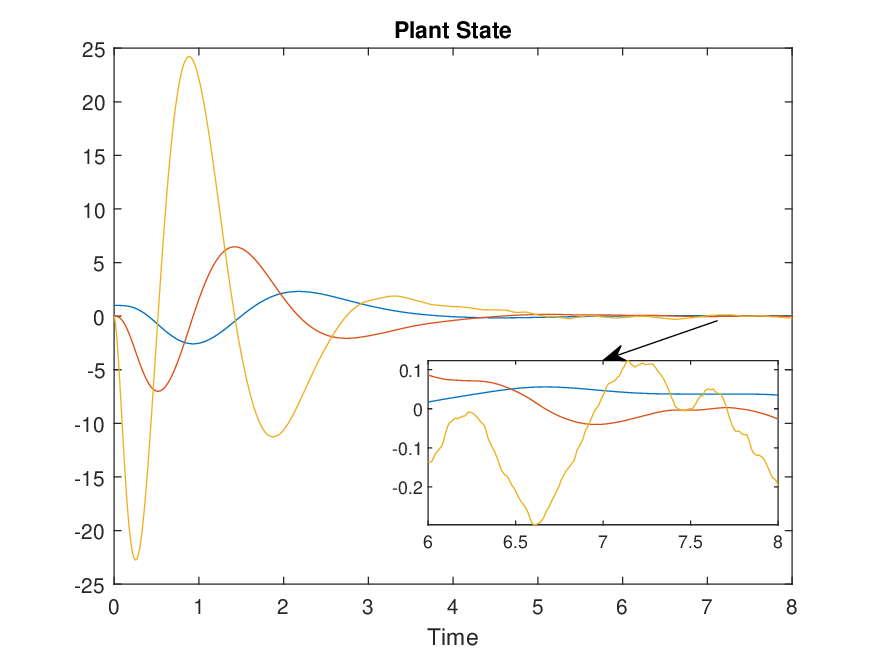}
} \hfill
%\subfigure[]{\includegraphics[width=.5\linewidth]%{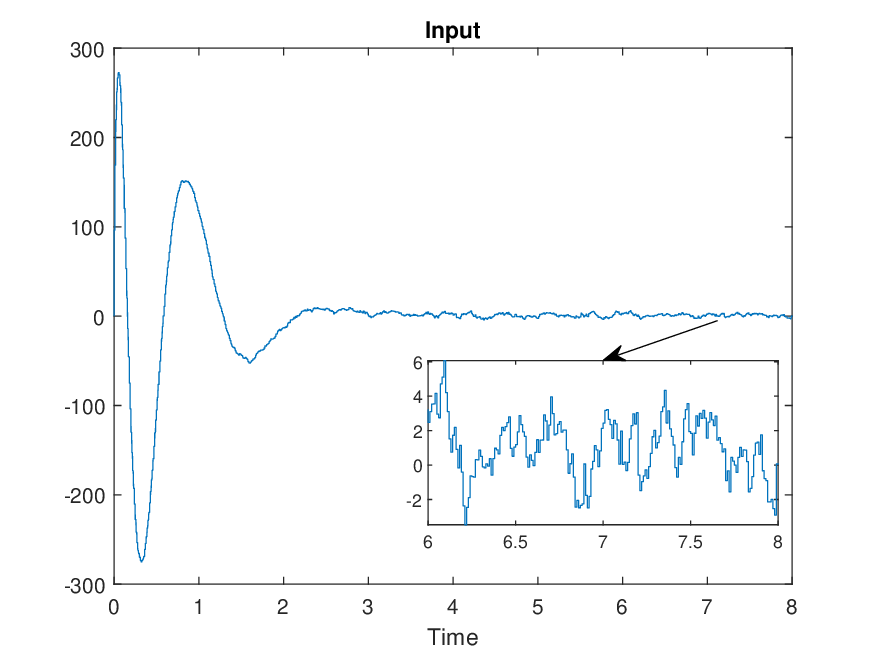}
%} \hfill
%\subfigure[]{\includegraphics[width=.5\linewidth]%{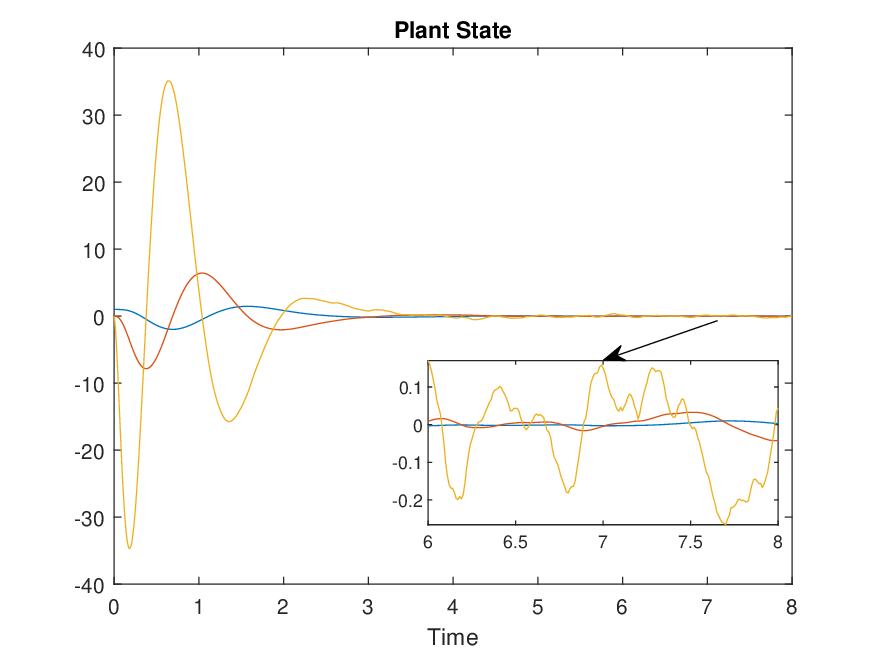}
%} \hfill
\caption{This shows the performance of an ADRC controller on the unstable plant with a constant disturbance, subject to noise and a finite sampling rate. %In (a) and (b), t
The input and state trajectories are shown for the slower controller, with closed-loop eigenvalues of $-2$, $-2.2$, $-2.4$, $-2.6$, $-2.8$, $-3$, $-3.2$. %, and the input and state trajectories for the faster controller are shown in (c) and (d). 
Note that, in the absence of noise, the input should asymptotically approach $1$ to counteract the constant disturbance. Compared to the faster controller in Figure~\ref{fig:noiseAndSamplingFast}, the noise is less amplified at the input, and the input's peak is smaller. 
} \label{fig:noiseAndSamplingSlow}
\end{figure*}

\begin{figure*}


%\subfigure[]{\includegraphics[width=.5\linewidth]%{figures/robustPerformanceInputSlow.eps}
%} \hfill
%\subfigure[]{\includegraphics[width=.5\linewidth]%{figures/robustPerformanceStateSlow.eps}
%} \hfill
\subfigure[]{\includegraphics[width=.5\linewidth]{figures/robustPerformanceInputFast.eps}
} \hfill
\subfigure[]{\includegraphics[width=.5\linewidth]{figures/robustPerformanceStateFast.eps}
} \hfill
\caption{This shows the performance of the faster ADRC controller, with closed-loop eigenvalues of $-3$, $-3.2$, $-3.4$, $-3.6$, $-3.8$, $-4$, $-4.2$, on the unstable plant with a constant disturbance, subject to noise and a finite sampling rate. %In (a) and (b), the input and state trajectories are shown for the slower controller, and t
The input and state trajectories for the faster controller are shown. Note that, in the absence of noise, the input should asymptotically approach $1$ to counteract the constant disturbance. Compared to the slower controller in Figure~\ref{fig:noiseAndSamplingSlow}, the noise is more amplified at the input, and the input's peak is larger. 
%in (c) and (d). 
  } \label{fig:noiseAndSamplingFast}
\end{figure*}

\subsection{Comparison to Bandwidth Parameterization}

Here, we compare the performance of these controllers with one tuned using the bandwidth parameterization. Note that this is not an apples-to-apples comparison, because the way we calculated gains for the slower and faster ADRC controllers relied on knowledge of the plant parameters. 
%find the gains for these simulations still relies on knowledge of the plant parameters, rather than tuning. Nonetheless, this can show the existence of ADRC gains which provide better performance than the bandwidth parameterization. 
To help quantify performance for tuning, we introduce the cost metric
\begin{align}\label{eq:simCost}
    \mathcal{C} = \mathcal{C}_y + \lambda \mathcal{C}_u,
\end{align}
where $\lambda > 0$ is a design parameter and 
\begin{align*}
    \mathcal{C}_y &\triangleq \int_0^\infty y^2 dt, \quad
    \mathcal{C}_u \triangleq \int_0^\infty u^2 dt.
\end{align*}

We select $\lambda = 0.1$ for our performance metric and check the cost for an individual simulation trajectory, with the same initial condition and plant as before but with $d(t)=0$. The costs shown here were calculated using a simulation time length of $30$ seconds, to approximate~\eqref{eq:simCost}. Then, we consider two parameters to tune: a controller bandwidth $\omega_c$ and an observer bandwidth $\omega_o$, where the former defines the location of the eigenvalues placed by $K$ and the latter define those placed by $G$, in the nominal problem. 
Note that some works consider a separate bandwidth for the controller in this fashion~\cite{SZ-ZG:2013,QZ-LG-ZG:2012,ZG:2006,tian2007frequency,zheng2007stability,tan2015linear}. We adjust each of these, $\omega_c$ in increments of $0.1$ and $\omega_o$ in increments of $1$, to minimize the cost~\eqref{eq:simCost}. We obtain $\omega_c = 1.1$, so $K=[1.3310, 3.63, 3.3]$, and $\omega_o=8$, so $G=[32, 384, 2048, 4096]^T$. This gives us closed-loop eigenvalues of $-14.3737$, $-9.0600 \pm 6.6661i$, $-0.3253 \pm 2.8065i$, $-0.0778 \pm 0.6079i$ and a cost of $\mathcal{C} = 1294.9$. 
For comparison, note that our slow controller achieved a lower cost of $\mathcal{C} = 987.2546$, although it has not been tuned specifically for that metric. The faster controller has a higher cost of $2801.5$. Note that although the cost metric is weighted to penalize slow convergence more than control effort, it seems as though, for all the controllers, the cost seems to be primarily determined by the size of the input peak at the start, leading to this metric preferring slower, but stable, controllers. 

%To obtain a controller of comparable speed, we choose $K=[27,27,9]$ so that the eigenvalues of the nominal plant are all placed at $-3$. The eigenvalues of the observer are then placed at $-\omega_o$ by choice of $G$. We increase the bandwidth $\omega_o$ until the slowest closed-loop eigenvalue is faster than $-2$. Note that this is not how it would be tuned in practice, because in practice one would not have access to the closed-loop eigenvalues, which require the plant parameters to calculate, but it should ensure that the convergence speeds for both methods are about the same. We choose a bandwidth of $\omega_o = 56$ with $G = [224,18816,702464,9834496]^T$, which results in closed-loop eigenvalues of $-81.3203, -58.8652 \pm 26.6650j, -19.3919, -2.0014 \pm 0.8413j, -8.5546$. 

We show the simulated trajectories in Figure~\ref{fig:bandwidth}. First, we show the basic performance of this controller as a baseline trajectory. Then, we show a trajectory with the same non-vanishing, slowly time-varying disturbance. Finally, we show a trajectory when the output is corrupted by the same noise as before, with the same limited sampling rate as before. %Note that we do not show a trajectory for a limited sampling rate, because using the same sampling period of $0.01$ renders the system unstable. 

 The controller tuned with the bandwidth parameterization seems to not only result in a higher cost than the slower controller, but also does not attenuate the marginally stable disturbance as well and appears to be more vulnerable to noise. While it does have a lower cost than the faster controller while having similar input effort, in terms of peak and steady-state range, under noise, it does not seem to handle the marginally stable disturbance as well as the faster controller, nor does the output converge as quickly. This provides an example of how, when minimizing input effort is a priority, the bandwidth parameterization may not provide desirable performance. 

\begin{figure*}
\subfigure[]{\includegraphics[width=.5\linewidth]{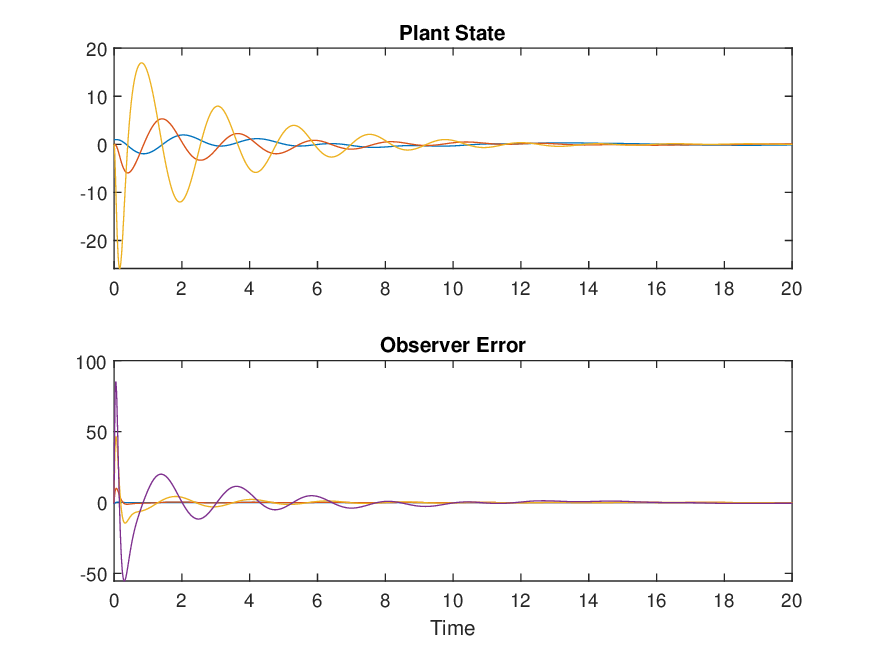}
} \hfill
\subfigure[]{\includegraphics[width=.5\linewidth]{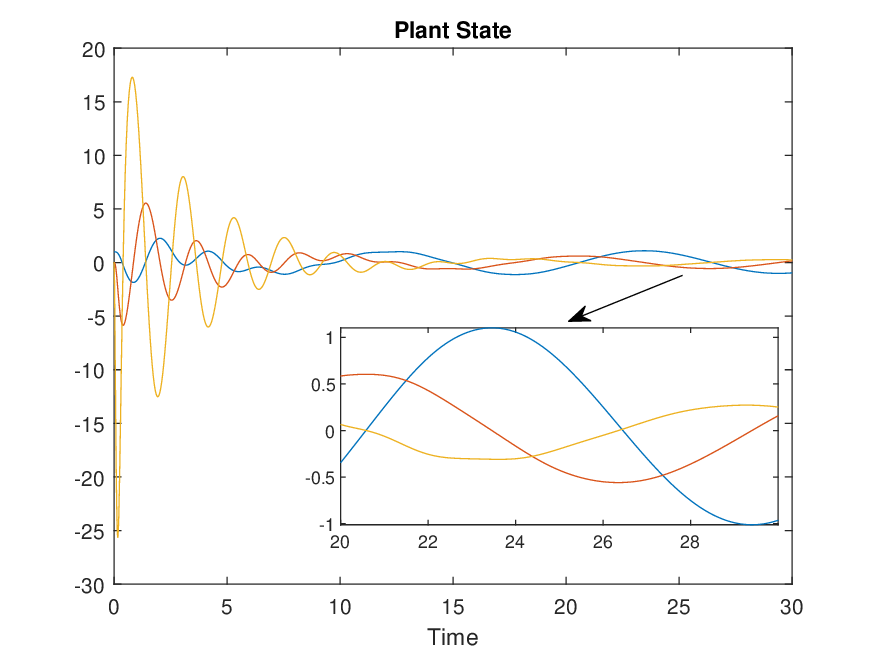}
} \hfill
\subfigure[]{\includegraphics[width=.5\linewidth]{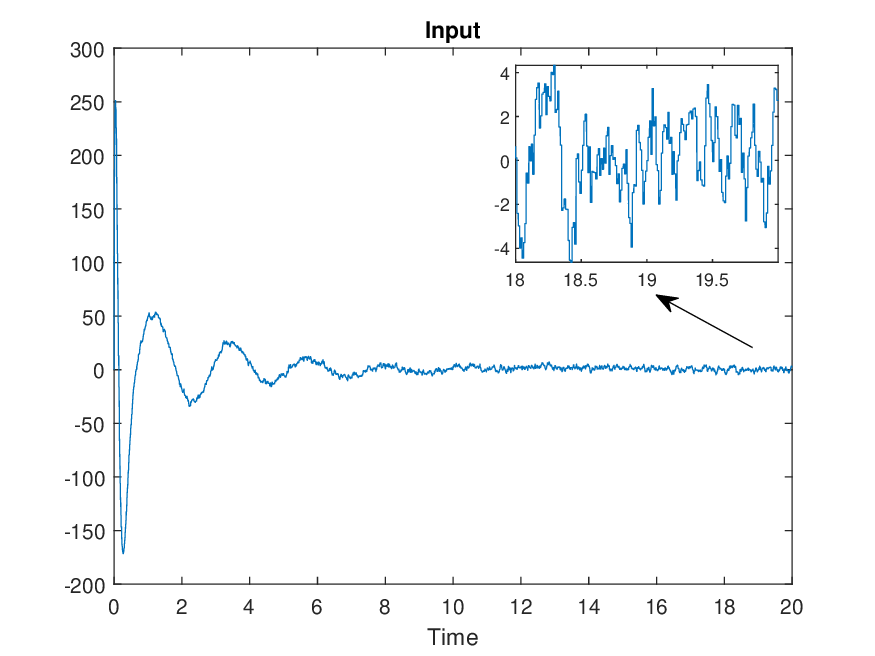}
} \hfill
\subfigure[]{\includegraphics[width=.5\linewidth]{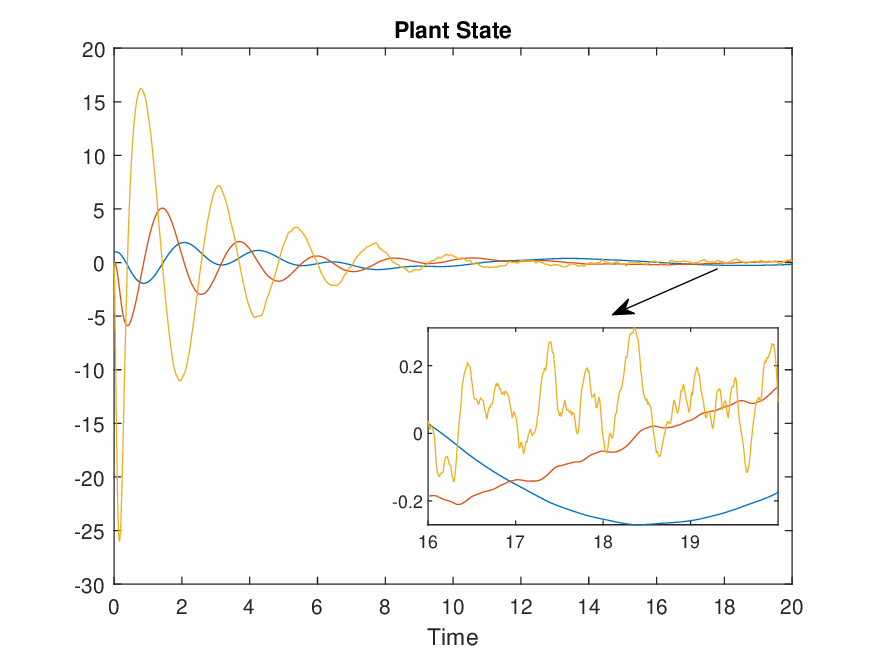}
} \hfill
\caption{This shows the performance of an ADRC controller which was tuned with the bandwidth parameterization for lower control effort on the unstable plant. (a) shows the basic performance under a constant disturbance. (b) shows a trajectory when the system is subjected to a non-vanishing, slowly time-varying disturbance. %, which seems to be handled well by this controller. 
(c) and (d) show the input and state trajectories, respectively, when the system is subjected to output noise and a limited sampling rate. 
%While the system output does not seem to be significantly affected by the noise, the noise is drastically amplified at the input, so that the input reaches magnitudes of up to $200$ after the system is in steady state.  
} \label{fig:bandwidth}
\end{figure*}

\section{Research Directions}

%One implication of this research is that it may not be beneficial to design controllers and observers separately, because the separation principle does not hold. For example, in converting Problem~\ref{prob:ADRCpolePlacementInitial} into a nominal problem~\ref{prob:ADRCpolePlacementInitialNom}, 
%we lose the ability to guarantee that certain gains only influence certain eigenvalues.%, as noted in Remark~\ref{rm:separationQuestion}. 
%As noted in Remark~\ref{rm:additionalFreedom}, we have some freedom in choosing the gains to achieve given closed-loop eigenvalues. However, we cannot necessarily even assume that choosing $K$ smaller or larger will have a significant influence on the control effort, which would be intuitive, because $G$ must change as well to keep the eigenvalues the same. As shown in Section~\ref{sec:sims} B. where choosing the gains differently did not have a significant effect on the closed-loop performance. In this case, the controller and observer are coupled in such a way that designing them separately may not make sense. Note that high gain parameterizations may be an exception to this, because they use time scale separation to achieve guarantees similar to the separation principle. 
Here, we discuss potential future research directions which we believe are important. 

Thus far, we have considered that $d(t)$ is exogenous and does not depend in any way upon the plant state $x$. A relevant future research topic is to consider what happens when $d(t)$ has dynamics which depends on $x$. Specifically, we could consider that the disturbance takes the form~\eqref{eq:TVdisturb}, but where $\gamma(t)$ is instead generated by the following system
\begin{align}\label{eq:TVdisturbDynamicsWithInput}
    \dot{\chi} &= f_d(\chi,x) \nonumber\\
    \gamma(t) &= C_d\chi,
\end{align}
where, with some abuse of notation, $f_d(\cdot) \in \mathbb{R}^M$ now depends on both the disturbance state dynamics $\chi$ and the plant dynamics $x$. Note that this could allow us to model plants where $N> \rho$, by considering some of the original plant dynamics as part of the disturbance dynamics, at least under some assumptions on that original plant. %Note that we do not need to consider the case where $\dot{\chi}$ depends on the input $u$, because there always exists a coordinate transformation such that this is not the case. 

%This would allow us to model plants where $N > \rho$. Note that we don't need to consider that $d(t)$'s dynamics depends directly on $u$, because there will always exist a change of coordinates such that it only depends on $x$ instead. 

One method of guaranteeing stability in such a case would be to consider that the system which generates the disturbance~\eqref{eq:TVdisturbDynamicsWithInput} is stable, similarly to Section~\ref{sec:TVdisturbance}. However, guaranteeing stability of the overall system is not straightforward in this case. In this case, because of the dependence of the disturbance system~\eqref{eq:TVdisturbDynamicsWithInput} on the plant state $x$, the plant system and the disturbance system are in a feedback loop. 
One could then apply the small-gain theorem to find sufficient conditions for stability of the overall system~\cite{khalil2002nonlinear,ogata2010modern}.  
%, because the plant system and the disturbance system are now in a feedback loop. In order to guarantee stability, one could look at the loop gain and loop phase, in the linear case. 

Note, however, that such a result may be conservative and it may be possible to control systems where~\eqref{eq:TVdisturbDynamicsWithInput} is not stable. Because the disturbance $d(t)$ depends on the input $u$, at least indirectly through the plant state $x$, it may be possible to stabilize the disturbance system through proper choice of $u$. Finding what conditions are required for ADRC to do so, and what modifications could allow ADRC to do so when those conditions are not met, is another potential area of research. 

A natural extension of this work is to develop another parameterization for ADRC, which would serve as an alternative to the bandwidth parameterization. The goal would be to allow practitioners to \emph{tune} ADRC for the desired performance promised by the results of this paper, with greater flexibility than what is offered by the bandwidth parameterization, at the cost of more parameters to tune. 

One potential modest option would be to consider a linear high-gain parameterization for the observer, where the observer eigenvalues are not all placed in the same location. More specifically, consider a linear 4th order ESO, as in~\eqref{eq:ADRC3Dcontrol}, 
with a more general high-gain parameterization, such that $g_i = \alpha_i/\epsilon^i$, where $\alpha_i \in \mathbb{R}$ is a parameter to be described shortly. This is similar to the parameterization used in~\cite{freidovich2008performance} and to one mentioned in~\cite{ZHW-HCZ-BZG-FD:2020}. The parameters $\alpha_i$ are chosen such that the matrix 
\begin{align}\label{eq:HGOinitialPoles}
    A_\alpha = \left[ \begin{matrix}
        -\alpha_1 & 1 & 0 & 0\\
        -\alpha_2 & 0 & 1 & 0\\
        -\alpha_3 & 0 & 0 & 1\\
        -\alpha_4 & 0 & 0 & 0
    \end{matrix}\right]
\end{align}
is Hurwitz. Note that the observer error dynamics, in the nominal case, will then have eigenvalues of $\text{eig}(A_\alpha)/\epsilon$. The bandwidth parameterization is a special case of this, where the eigenvalues of~\eqref{eq:HGOinitialPoles} are all $-1$ and the bandwidth is $\omega_o = 1/\epsilon$. 
%While this choice may be convenient in some respects, 
%To the best of our knowledge, this choice was made for convenience in analysis and tuning, rather than for desirable performance. 
%To the best of our knowledge, most works on high gain observers do not provide guidance on where the initial poles should go and instead focus on the effect of scaling them up while leaving the initial poles as a design parameter. The bandwidth parameterization, on the other hand, places them all in the same location at $-\omega_o$, for simplicity and perhaps intuition in tuning. 
%Therefore, it seems to be an open problem to decide where the eigenvalues of~\eqref{eq:HGOinitialPoles} should be placed for desired performance of ADRC. 
However, it may be beneficial to consider the eigenvalues of \eqref{eq:HGOinitialPoles} as design parameters and provide analysis and guidance as to how they should be chosen. 
The goal would be to preserve the bandwidth parameterization's trade-off between observer speed and disturbance rejection/performance, while being able to provide better performance in other respects, such as noise tolerance. 

Another direction, which is not mutually exclusive with the preceding, is to design the ADRC gains so that the transfer function of the nominal system has small $\mathcal{L}_2$ gain. To motivate this, mathematically, we can write our closed-loop system~\eqref{eq:ADRC3dCL} as a feedback between the nominal system and a disturbance system. Specifically, we can consider
\begin{align}
    \left[\begin{matrix}
\dot{x}\\
\dot{\tilde{x}}\\
\dot{\tilde{d}}
\end{matrix}\right] &= \hat{A}_{CL}\left[\begin{matrix}
x\\
\tilde{x}\\
\tilde{d}
\end{matrix}\right] + B_{CL}d_f \nonumber\\ 
y_{CL} &= x
\end{align}
where $d_f = \mathbf{a}x$ is the feedback disturbance to the closed-loop nominal system, which depends on the ``output'' of the closed-loop system $y_{CL}$, and 
\begin{align}
    B_{CL} \triangleq \left[\begin{matrix}
        0 &
        0 &
        1 &
        0 &
        0 &
        -1 &
        0
    \end{matrix} \right]^T
\end{align}
gives us the disturbance's effect on both the plant and observer error dynamics. Taking
\begin{align}
    C_{CL} &\triangleq \left[ \begin{matrix}
        I_3 & \mathbf{0}_{3\times 4}
    \end{matrix} \right], 
\end{align}
we can write the transfer function of the closed-loop system, with the disturbance $d_f$ as the input and the plant state $x$, as the output
\begin{align}\label{eq:CLTF}
    H_{CL}(s) \triangleq C_{CL}\left(sI_7 - A_{CL} \right)B_{CL},
\end{align}
which depends only on the controller parameters and not on the unknown plant parameters. Now, a disturbance which depends on the plant state, such as $d_f$ in this example, can be written as being in a feedback connection with this transfer function. 
%and consider it to be in a feedback connection 
%Note that this transfer function would consider the total disturbance as an input, because it enters the nominal system in known locations, and the plant state as the output, because in general the total disturbance can depend on those. 
Designing $K$ and $G$ such that~\eqref{eq:CLTF} has small $\mathcal{L}_2$ gain could then help to guarantee stability for a wider range of plants, using results such as the small-gain theorem~\cite{khalil2002nonlinear,ogata2010modern} for example. 

\section{Conclusions}
This work shows that ADRC can be used to stabilize a class of 3rd order disturbed linear plants for desired performance, without requiring the plant parameters to be known to the observer, and without requiring arbitrarily high observer gains in general, by showing how the gains can be chosen to achieve desired closed-loop eigenvalues. It further shows that stability is possible for stable disturbances, and conjectures that the main result could be extended to arbitrarily large plants, with equal order and relative degree. 
Additionally, it shows how ADRC can recover the performance of model-based observers, if its gains are chosen properly. 
Because providing a way to find the desired ADRC gains, without knowing the plant parameters, is beyond the scope of this work, it instead points to promising directions for future work to parameterize ADRC in novel ways. 
%Future work would include determining a tuning method, which would allow practitioners to find the observer gains for desired performance without needing to know the plant parameters. 

\section*{Acknowledgment}

This work was supported in part by the Department of the Navy, Office of Naval Research (ONR), under federal grant N00014-22-1-2207. 

\appendix
\subsection{Proof of Remark~\ref{rm:PIDisInsufficient}}

Here, we show that PID control is insufficient to stabilize our third-order linear system of relative degree $3$, without additional assumptions. Starting from our canonical form of the system~\eqref{eq:canonical3Dplant}, we assume for simplicity that $d(t)=0$, for $t \geq 0$. PID control takes the form
\begin{align*}
u &= -k_Px_1 - k_Dx_2 - k_I\hat{d},
\end{align*}
where we've assumed that the controller has access to both the output $x_1$ and its derivative $x_2$, and where the controller has virtual state
%\begin{align*}
$\dot{\hat{d}} = x_1 $.
%\end{align*}
This makes the closed-loop system, under PID control, 
\begin{align}\label{eq:PID3dCL}
\left[\begin{matrix}
\dot{\hat{d}}\\
\dot{x}
\end{matrix}\right] &= \left[ \begin{matrix}
0 & 1 & 0 & 0\\
0 & 0 & 1 & 0\\
0 & 0 & 0 & 1\\
-bk_I & a_1-bk_P & a_2-bk_D & a_3
\end{matrix} \right]\left[\begin{matrix}
\hat{d}\\
x
\end{matrix}\right].
\end{align}
The trace of a matrix is the sum of its diagonal elements and it is equal to the sum of all the eigenvalues. Note that the trace of the system matrix of~\eqref{eq:PID3dCL} is $a_3$, indicating that it depends only on a plant parameter and not on any control parameters. The system~\eqref{eq:PID3dCL} is stable if and only if its eigenvalues each have negative real part, and a necessary condition for that is that the trace of the system matrix is negative. Therefore, a necessary condition for PID control to stabilize this system is $a_3 < 0$, indicating that PID control cannot stabilize this system in general. 

{\scriptsize

    \bibliographystyle{ieeetr}
    %\bibliography{jbernebuMain}

  }

\end{document}